\DeclareMathOperator*{\heureta}{\eta}
\DeclarePairedDelimiter\ceil{\lceil}{\rceil}
\DeclarePairedDelimiter\floor{\lfloor}{\rfloor}
\newcolumntype{B}{>{\hsize=1.5\hsize}X}
\newcolumntype{C}{>{\hsize=1.2\hsize}X}
\newcommand\KleanthiNotes[1]{ }
\newcommand{\limes}{\textsc{Limes}\xspace}
\newcommand{\silk}{\textsc{Silk}\xspace}
\newcommand{\radon}{\textsc{Radon}\xspace}
\newcommand{\strabon}{\textsc{Strabon}\xspace}
\newcommand{\eh}{\textsc{eth}\xspace}
\lstdefinestyle{sparql}{float=tb, numberblanklines=true, morekeywords={SELECT,DISTINCT,SAMPLE,FROM,WHERE,FILTER,ORDER,GROUP,BY,IN,AS,GRAPH,SERVICE,PREFIX}}
\tikzset{filled/.style={fill=circle area, draw=circle edge, thick},
    outline/.style={draw=circle edge, thick}}
\begin{document}

\title{Annex: \radon~-- Rapid Discovery of Topological Relations}

% \author{
% % Mohamed Ahmed Sherif\inst{1}  \and Kevin Dre{\ss}ler\inst{1} \and Panayiotis Smeros\inst{2} \and Axel-Cyrille Ngonga Ngomo\inst{1}
% }
% 
% \institute{
% %  Department of Computer Science\\
% %  University of Leipzig\\
% %  Augustusplatz 10, 04109 Leipzig\\
% %  \email{\{lastname\}@informatik.uni-leipzig.de}
% % \and
% %  EPFL, Switzerland\\
% % \email{\{firstname\}.\{lastname\}@epfl.ch}
% }
\author{Mohamed Ahmed Sherif$^{a}$, Kevin Dre\ss{}ler$^{a}$, \\Panayiotis Smeros$^{b}$ and Axel-Cyrille Ngonga Ngomo$^{a}$}
\institute{$^{a}$ Department of Computer Science, University of Leipzig, 04109 Leipzig, Germany \\
\texttt{\{sherif|dressler|ngonga\}@informatik.uni-leipzig.de}\\
$^{b}$ EPFL, BC 142, Station 14, CH-1015 Lausanne, Switzerland\\
\texttt{panayiotis.smeros@epfl.ch}	
}
\maketitle

\begin{abstract}
    % * Geospatial data is everywhere
    % * Linking on geospatial data is different from other linking because geo-spatial data has inner structure
    % * Present and evaluate a novel approach for linking geo-spatial objects based on topological relations
    % * Approach is based on bounding boxes
    % * Evaluation against SILK suggests improvement of more than 1 order of magnitude in runtime and an improvement of the relative reduction ratio by ???.
    
Datasets containing geo-spatial resources are increasingly being represented according to the Linked Data principles. Several time-efficient approach-es for discovering links between RDF resources have been developed over the last years. However, the time-efficient discovery of topological relations between geo-spatial resources has been paid little attention to. We address this research gap by presenting \radon, a novel approach for the rapid computation of topological relations between geo-spatial resources. Our approach uses a sparse tiling index in combination with minimum bounding boxes to reduce the computation time of topological relations. Our evaluation of \radon's runtime on $45$ datasets and in more than $800$ experiments shows that it outperforms the state of the art by up to 3 orders of magnitude while maintaining an F-measure of 100\%. Moreover, our experiments suggest that \radon scales up well when implemented in parallel.
%even when dealing with big datasets. 
\end{abstract}

%\keywords{Link Discovery, Linked Data, Geo-spatial relation, Topological Relation}

\section{Introduction}
\label{sec:introduction}
Geo-spatial datasets belong to the largest sources of Linked Data. For example, \emph{LinkedGeoData} contains more than 20 billion triples which describe millions of geo-spatial entities. 
Datasets such as \emph{NUTS} use polygons of up to $1500$ points to describe resources such as countries. As pointed out in previous works~\cite{orchid}, only 7.1\% of the links between resources connect geo-spatial entities. This is due to two main factors. First, the \emph{large number of geo-spatial resources} available on the Linked Data Web requires scalable algorithms for computing links between geo-spatial resources.  In addition, the \emph{description of geo-spatial resources being commonly based on polygons} demands the computation of particular relations, i.e., topological relations, between geo-spatial resources. 
%The computation of these relations can be very time-demanding as their complexity can be  
According to the Linked Data principles\footnote{\url{https://www.w3.org/DesignIssues/LinkedData.html}} and for the sake of real-time application such as \emph{structured machine learning} (e.g., DL-Learner~\cite{dllearner_jmlr}) and \emph{question Answering} (e.g., DEQA platform~\cite{LEH+12b}), the provision of explicit topological relations between resources is of central importance to achieve scalability. 
%Consequently, while there is obviously a need for solutions that are both scalable and able to deal with topological relations
However, only a few approaches have been developed to deal with geo-spatial data represented in RDF. For example, \cite{orchid} uses the \emph{Hausdorff} distance to compute a topological distance between geo-spatial entities. \cite{Panayiotisldow2016} builds upon \emph{MultiBlock} to compute topological relations according to the DE-9IM standard between geo-spatial entities. 

We go beyond the state of the art by providing \emph{a novel indexing method combined with space tiling that allows for the efficient computation of topological relations between geo-spatial resources}. 
In particular, we present a novel sparse index for geo-spatial resources. We then develop a strategy to discard unnecessary computations for DE-9IM relations based on \emph{bounding boxes}. Our extensive experiments show that our approach scales well and outperforms the state of the art by up to 3 orders of magnitude w.r.t. to its runtime. Moreover, we show that our approach to discarding computation of topological relations is more effective than the state of the art and leads to less computations of topological relations having to be carried out.
The contributions of this paper can be summarized as follows: 
(1) We present a novel indexing algorithm for geo-spatial resources based on an optimized sparse space tiling. (2) We provide a novel filtering approach for the rapid discovery of topological relations (\radon), which uses minimum bounding box (MBB) approximation. (3) We show that \radon is able to discover any of the  DE-9IM relations that involve intersection of at least one point.
  %for which we analyse their theoretical characteristics.
  %\item We use the characteristics space tiling and MMB to improve the scalability of \radon.
(4) We evaluate \radon on real datasets and show 
%that in addition of achieving $100\%$ F-measure, 
that it clearly outperforms the state of the art.
%\end{enumerate}

%The rest of this paper is structured as follows: In \autoref{sec:preliminaries}, we present preliminaries necessary to understand this paper. 
%In \autoref{sec:radon}, we present the \radon algorithm in detail. 
%Thereafter, we evaluate our approach on real datasets in \autoref{sec:evaluation}.
%Finally, we summarize related work (\autoref{sec:relatedwork}) and conclude our work (\autoref{sec:conclusion}). 

%Large number of geo-spatial entities represented as RDF.
%Linking geo-spatially has been paid little attention to.
%Panayotis' paper looks at topological relation by relying on MultiBlock
%Make use of the invariance of topological relations under the transformation from orthodromic space to euclidean space to improve the runtime of the computation of topological relations
%Comparison with SOTA show better runtimes and better scalability
%Parallel implementation shows good speedup

%TODO: Prove invariance or at least sketch proof

\section{Preliminaries}
\label{sec:preliminaries}
% * Geo-spatial data as polygons 
% * topological relations according to DE-9IM including explanation
% In this section, we present the concepts and notation that are necessary to understand the rest of the paper. 
% First, we introduce the LD problem by providing a formal definition akin to that introduced in \cite{NGON12c}. 
% Thereafter, we provide the notation for the basic topological predicates introduced in~\cite{clementini1994modelling}. 

% In the following, we introduce the reader to the fundamental concepts underlying our optimized approach for rapid link discovery of topological relations (\radon).

%In this section, we present the concepts and notation that are necessary to understand the rest of the paper. 
%First, we introduce the Link Discovery problem by providing a formal definition akin to that introduced in~\cite{NGON12c}. 
%Thereafter, we provide the notation of the DE-9IM standard~\cite{clementini1994modelling}, space tiling~\cite{NGON12} and minimum bounding boxes~\cite{o1985finding}.

%\subsubsection{Link Discovery}
% Let $\mathcal{K}$ be the set of all RDF knowledge bases.
Let $K$ be a finite RDF knowledge base.
$K$ can be regarded as a set of triples $(s, p, o) \in (\mathcal{R}  \cup \mathcal{B}) \times \mathcal{P} \times (\mathcal{R} \cup \mathcal{L} \cup \mathcal{B})$, where $\mathcal{R}$ is the set of all resources, $\mathcal{B}$ is the set of all blank nodes, $\mathcal{P}$ the set of all predicates and $\mathcal{L}$ the set of all literals.
Given a set of source resources $S$ and target resources $T$ from two (not necessarily distinct) knowledge bases $K_1$ and $K_2$
as well as a relation $R$, the goal of \emph{Link Discovery} (LD) is is to find the set of \emph{mapping} $M = \{(s,t) \in S \times T: R(s,t)\}$.
Naive computation of $M$ requires quadratic time complexity to compare every $s \in S$ with every $t \in T$, which is clearly impracticable for large datasets such as geo-spatial datasets, which are the focus of this work. 
Here, we present an algorithm for efficient computations of topological relations between resources with geo-spatial descriptions (i.e., described by means of vector geometry).\footnote{Most commonly encoded in the WKT format, see \url{http://www.opengeospatial.org/standards/sfa}.} 
We assume that each of the resources in $S$ and $T$ considered in the subsequent portion of this paper as being described by a geometry, where each geometry is modelled as sequence of points. 
% An example of such resource would be \texttt{(R1, :hasGeometry, "POINT(30 26)")}.
An example of such resources is shown in \autoref{fig:leipzig1}.
%, where we plot the city of Leipzig from the NUTS dataset (in gray) together with topologically related green urban areas from the CLC dataset (in green and blue).\footnote{For more details about NUTS and CLC see \autoref{sec:setup}.}

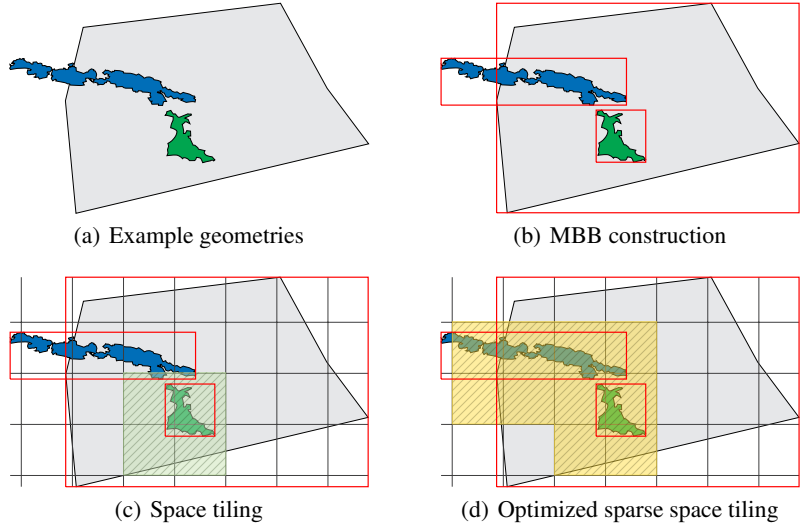
\begin{figure}[t]
\centering
\subfigure[Example geometries]{\label{fig:leipzig1}
    \resizebox{0.45\linewidth}{!}{
        \begin{tikzpicture}[scale=3]
        % [spy using outlines={red, circle, magnification=4, size=.3cm, connect spies}]
        \draw[fill=Gray!20!white, line width=0.01pt] (12.24334845000007, 51.348402) -- (12.260675450000065, 51.42005750000007) -- (12.45318745000003, 51.44363850000005) -- (12.498737950000077, 51.36050450000005) -- (12.5391649500001, 51.3063565) -- (12.253649450000097, 51.23853550000001) -- (12.24334845000007, 51.348402);

        \draw[fill=Green, line width=0.01pt]
        (12.350457680302574, 51.335741860486834) -- (12.352140928324896, 51.333637635911934) -- (12.352439235513609, 51.333260718584498) -- (12.35313583488478, 51.332573109900757) -- (12.353698084666826, 51.332288104893777) -- (12.354745816406369, 51.332356893302503) -- (12.357841093431484, 51.333607855296478) -- (12.362974379474052, 51.335692068272955) -- (12.363614519530703, 51.335366147209989) -- (12.365929215946238, 51.335501366978214) -- (12.366311547463905, 51.334507614038031) -- (12.364308429206481, 51.333575625095392) -- (12.359845102391926, 51.331504636298114) -- (12.363055148685051, 51.33018700453627) -- (12.364807826900106, 51.325985154216667) -- (12.361284026250093, 51.327881992318808) -- (12.359637477398138, 51.327576574910601) -- (12.358879859135268, 51.326425098612404) -- (12.357259861513205, 51.325364351195581) -- (12.357642647938114, 51.324289466492395) -- (12.357946704334694, 51.322987080911844) -- (12.358780590768848, 51.318658908839012) -- (12.358634489819959, 51.316495595865717) -- (12.360475078143981, 51.316603469281716) -- (12.3617195173801, 51.319237258521404) -- (12.36541261590396, 51.318790499281072) -- (12.365630387854567, 51.313999496266284) -- (12.362997395756869, 51.313763242389527) -- (12.363995000516482, 51.311899434919425) -- (12.365051262574289, 51.311874076620185) -- (12.365607273029472, 51.310066772254423) -- (12.367888419286293, 51.305096585224398) -- (12.369674573591915, 51.304967455872415) -- (12.372537273655375, 51.302571373158798) -- (12.373086928998044, 51.301165079604004) -- (12.379545575204549, 51.299666971379082) -- (12.37872328759245, 51.298212804499961) -- (12.378921390603191, 51.29727894345843) -- (12.379763225230846, 51.295991952475191) -- (12.381119015331068, 51.294987706263804) -- (12.383924177183824, 51.294093603416748) -- (12.386512362349434, 51.293660424501084) -- (12.389192648396918, 51.290668357656699) -- (12.387995633329179, 51.289668118388406) -- (12.387473747627904, 51.290413888067413) -- (12.385350791642892, 51.290262653493834) -- (12.385057800671893, 51.289804036134811) -- (12.381065358211419, 51.288987702341863) -- (12.377334617220791, 51.287971108068191) -- (12.375926843008681, 51.291272742814236) -- (12.374910342900066, 51.292508760459086) -- (12.37300172153798, 51.292955671587016) -- (12.371290236206972, 51.293153338091798) -- (12.370511859464582, 51.291110323034957) -- (12.369462259533334, 51.290022259850353) -- (12.36454034937703, 51.289959109934649) -- (12.362420974717045, 51.291903825165875) -- (12.360083167437903, 51.292796369108409) -- (12.358273361117101, 51.292718268534998) -- (12.357712950055422, 51.291838759381911) -- (12.355791296729445, 51.293606601141796) -- (12.354171523467116, 51.294559231622756) -- (12.352414308467877, 51.295301425808766) -- (12.350630974187782, 51.295432018276152) -- (12.348374522962821, 51.295403685406008) -- (12.342410999386157, 51.294801344483417) -- (12.343446198950682, 51.297354061499775) -- (12.342943438819521, 51.30071634443685) -- (12.344034825445529, 51.304040806650157) -- (12.343555532700668, 51.306082737522757) -- (12.343278938466336, 51.306738587359575) -- (12.345592045420219, 51.308126661912382) -- (12.346437286017204, 51.308793267939492) -- (12.347353938594205, 51.309508555988543) -- (12.348880224960007, 51.311222686411064) -- (12.350055199360009, 51.312683311402537) -- (12.351202153992402, 51.314568733611019) -- (12.351935129380868, 51.316675396602967) -- (12.352188800182851, 51.318364204327793) -- (12.351940213359418, 51.318853258844456) -- (12.351641954959257, 51.31944097097989) -- (12.350388470599748, 51.320347897807686) -- (12.348172259629171, 51.321417335084192) -- (12.345990332673312, 51.322030266067316) -- (12.346490708957655, 51.324833275483222) -- (12.347353460398706, 51.327208587621207) -- (12.34836525703229, 51.329421467505341) -- (12.348559679448288, 51.332634488908042) -- (12.344419357022041, 51.333669329491201) -- (12.342416238561107, 51.334167976991068) -- (12.340703846780286, 51.334227278551197) -- (12.341664189855061, 51.335511184311514) -- (12.342043977133882, 51.336022046304308) -- (12.
342270644067744, 51.337729514149935) -- (12.342342068056602, 51.337755856622543) -- (12.346176128477437, 51.339026334031388) -- (12.346268259457363, 51.338851763161273) -- (12.347353684019337, 51.33710045182044) -- (12.348738598565443, 51.335337963082502) -- (12.350457680302574, 51.335741860486834);

        \draw[fill=NavyBlue, line width=0.01pt] (12.246522753244609, 51.384075217037221) -- (12.247393832331706, 51.384372062492751) -- (12.249163829478919, 51.384785924286867) -- (12.249415284430482, 51.384768019121104) -- (12.250445526525292, 51.385298359110479) -- (12.25147057030267, 51.384750516173575) -- (12.251535098380623, 51.384719085433908) -- (12.249654309392385, 51.382077249513983) -- (12.250816656546887, 51.381651442548353) -- (12.252197359372134, 51.381849688619759) -- (12.25271828098597, 51.381020470191942) -- (12.253766194393881, 51.380730110400258) -- (12.253888913368309, 51.377456252920126) -- (12.259271130637494, 51.375722038656932) -- (12.260885093595359, 51.375061607050284) -- (12.262352700674414, 51.375576271034383) -- (12.264159634897592, 51.376532200546407) -- (12.265236264932071, 51.376311137787653) -- (12.266335283287249, 51.378085757914015) -- (12.268697296117351, 51.379555188036996) -- (12.27108572302801, 51.378798948027466) -- (12.270887879624965, 51.377845254013053) -- (12.272487634123451, 51.377432333310985) -- (12.273498662599771, 51.375288410183721) -- (12.27380881647413, 51.373826273197267) -- (12.27313748501723, 51.371055063531912) -- (12.273277240542997, 51.370043659351246) -- (12.274438352045674, 51.36950772665967) -- (12.276010701822814, 51.36952599330121) -- (12.276695558642889, 51.370266910024725) -- (12.27920018445908, 51.370102219271558) -- (12.280096156493189, 51.369786586921151) -- (12.280868623471511, 51.369106962205244) -- (12.282152852820929, 51.369003327444418) -- (12.282909695584273, 51.369832290103048) -- (12.282583808915742, 51.370885012220768) -- (12.285480368612724, 51.370253988591884) -- (12.286542245745107, 51.371971310788872) -- (12.284743954871489, 51.372654544283556) -- (12.285617613345947, 51.373942359654343) -- (12.283289090612593, 51.374879011447234) -- (12.281813680648778, 51.373893277483816) -- (12.279653525205013, 51.375761525447771) -- (12.280236922803141, 51.376232879931557) -- (12.280870718316358, 51.376739875609651) -- (12.282016629140035, 51.377834282255868) -- (12.282428735362069, 51.378533158196767) -- (12.283810966985126, 51.377810793806972) -- (12.284551938088933, 51.378210890321448) -- (12.285447715690598, 51.377893797951408) -- (12.28726608794922, 51.37685398694606) -- (12.288633682489589, 51.376161590354052) -- (12.289334172401844, 51.377193200583164) -- (12.290954638092328, 51.376940291064393) -- (12.29130423239611, 51.378115618172401) -- (12.296281390888996, 51.377762572473443) -- (12.29969187538671, 51.376950178516097) -- (12.299987086197998, 51.375641304176433) -- (12.302524948741569, 51.375351667977668) -- (12.306842510490204, 51.37525786524693) -- (12.310553507338504, 51.374333876474012) -- (12.312428592008983, 51.373491247994451) -- (12.314735468685871, 51.374610329100456) -- (12.319036580904704, 51.371726907262691) -- (12.320019886486373, 51.370566273950537) -- (12.321066342504787, 51.370727793677368) -- (12.322451702406559, 51.371437542380377) -- (12.324263967567594, 51.371636206215392) -- (12.325831351407274, 51.37160326616663) -- (12.32882323882083, 51.370348891019866) -- (12.329632057738772, 51.368982837480445) -- (12.333575239733351, 51.368290860162546) -- (12.335622230574739, 51.367495138798176) -- (12.336847136310912, 51.366712047608736) -- (12.336000701166544, 51.366001475159202) -- (12.338461789447825, 51.360832288791457) -- (12.341129193123255, 51.36186309942056) -- (12.34195650811335, 51.360224780975727) -- (12.344341513596744, 51.359113696037994) -- (12.345136446831756, 51.359787829310676) -- (12.346735493632552, 51.358573530505737) -- (12.347600792737344, 51.357568696531743) -- (12.35015624313213, 51.35671011433756) -- (12.352275044928382, 51.35711708078712) -- (12.355485947000082, 51.354999292113362) -- (12.35832209243811, 51.35662900342497) -- (12.359445439245739, 51.35618882106079) -- (12.362241222487967, 51.355492279136996) -- (12.364495716724051, 51.354852519015076) -- (12.365319549672318, 51.354396114121855) -- (12.367677737660886, 51.353165934808551) -- (12.369392736796422, 51.352926756915281) -- (12.36984169500715, 51.
351503681367561) -- (12.368564823621767, 51.351003258007282) -- (12.369223947483182, 51.350311194184769) -- (12.370352998643272, 51.349606977605731) -- (12.369851845600937, 51.347190620284501) -- (12.368009434018132, 51.347609557387749) -- (12.365737688287201, 51.34748229015446) -- (12.361164044798107, 51.346848302016092) -- (12.355994746647676, 51.349020015171874) -- (12.357658987482621, 51.350182885602777) -- (12.354004340495889, 51.350090335631627) -- (12.353282344220446, 51.351255093780189) -- (12.348831933359381, 51.350354508052533) -- (12.346311976552157, 51.353051007964702) -- (12.344517853737383, 51.35199511390072) -- (12.343404339602083, 51.351765068543145) -- (12.341345629308595, 51.352508274911536) -- (12.341076677906312, 51.353058477379868) -- (12.340291658571687, 51.354798218401335) -- (12.33977537650968, 51.355934269469216) -- (12.338198657694754, 51.356107697058178) -- (12.336758592041136, 51.358323816021944) -- (12.335715219317326, 51.358146109064059) -- (12.33651441206519, 51.356277026331561) -- (12.33353315389297, 51.357313588586649) -- (12.333122373602066, 51.356436679959451) -- (12.33723804256798, 51.355135645118878) -- (12.338312617804823, 51.354295765937707) -- (12.342005117713381, 51.347700461785969) -- (12.339771743403764, 51.347659560881731) -- (12.339021377452122, 51.34736935182292) -- (12.338145351555761, 51.343882052985677) -- (12.334104078488531, 51.343939429636244) -- (12.334208436228492, 51.345165267445587) -- (12.327811416678017, 51.345031704900158) -- (12.326931383170262, 51.346895114437444) -- (12.327814971934759, 51.34795678434886) -- (12.326212706909454, 51.347923861387216) -- (12.323977965678798, 51.351117154271265) -- (12.32599683224743, 51.351411765920325) -- (12.326634728325402, 51.351743415338447) -- (12.326603265275594, 51.352876257338892) -- (12.325826521488141, 51.353651145174013) -- (12.326372348581211, 51.356043841924567) -- (12.323706028339783, 51.356692835029129) -- (12.318572767864996, 51.356935508239829) -- (12.318365016232223, 51.356940722877987) -- (12.317838545365285, 51.357879263415711) -- (12.314636157536155, 51.358260729602989) -- (12.313942772650453, 51.356979426343052) -- (12.311870153345206, 51.356847496509239) -- (12.311741000486265, 51.356825949081063) -- (12.311414890693008, 51.357763935971192) -- (12.309736525560789, 51.357667320822657) -- (12.308901940797451, 51.358782565020512) -- (12.30860839859281, 51.360683275497514) -- (12.302067429369602, 51.36072216101087) -- (12.300204670240651, 51.361253211641476) -- (12.29897523894412, 51.362295542285459) -- (12.294893281479373, 51.362134065384943) -- (12.292976647459856, 51.361755851946569) -- (12.29247888874972, 51.364507023295516) -- (12.291283218060554, 51.365002892837012) -- (12.289297321331267, 51.365190711121755) -- (12.287333893244178, 51.365643213027894) -- (12.286573135568434, 51.367297759182676) -- (12.283978188948126, 51.36787279504874) -- (12.283318190508282, 51.366117002088174) -- (12.281746586784671, 51.36608473512895) -- (12.281477511731753, 51.365176116131728) -- (12.28039098594331, 51.364428798439732) -- (12.279227603716308, 51.364217859264741) -- (12.277642273343293, 51.364427907161243) -- (12.274905197989451, 51.363822518882955) -- (12.275001952726143, 51.364743700392999) -- (12.275091059443319, 51.365536595874865) -- (12.271440587291558, 51.366593565979514) -- (12.270422662707242, 51.367062062308619) -- (12.270489858935029, 51.367762191973824) -- (12.270922902640701, 51.368356553461652) -- (12.272059656836335, 51.368951737505981) -- (12.26937012395509, 51.369848641217544) -- (12.267475676358472, 51.370271444919716) -- (12.269696899969295, 51.366928260472775) -- (12.267004223809117, 51.369020067211252) -- (12.265774510392861, 51.370565570680945) -- (12.263152048553426, 51.370404783146533) -- (12.26302620893426, 51.369384238102391) -- (12.263816131530968, 51.367311711123655) -- (12.260364409727808, 51.366292946190796) -- (12.261799952116782, 51.363857880878328) -- (12.259521110551093, 51.363386894067759) -- (12.257257735681563, 51.363300098103672) -- (12.248183285952161, 51.36373488862143) -- (
12.247690183630381, 51.364600939469582) -- (12.246814861646913, 51.36507150134041) -- (12.245807838577376, 51.364836957468761) -- (12.2438414232022, 51.365773327486622) -- (12.242426683808604, 51.366180294147348) -- (12.233934642339268, 51.367788272869532) -- (12.234484927288017, 51.368934273987563) -- (12.233465776278955, 51.369447884088252) -- (12.230922483090223, 51.367880553669963) -- (12.22953548944434, 51.370177079255811) -- (12.230099422680977, 51.37105808334271) -- (12.229842750557417, 51.371660584527703) -- (12.231052857454873, 51.372262195645582) -- (12.230884004297474, 51.373259069857738) -- (12.229302474513062, 51.373332142595928) -- (12.230802855177876, 51.378160297483859) -- (12.229696317568656, 51.378334826817017) -- (12.228186214683037, 51.376558596519125) -- (12.227115492357047, 51.376714832637248) -- (12.225691352683789, 51.376605823482862) -- (12.225578943352197, 51.375464919097098) -- (12.227448529904082, 51.375489440295496) -- (12.227565725248423, 51.374193644304285) -- (12.226155072999806, 51.373837495877531) -- (12.225410889648259, 51.372882074620115) -- (12.225348309546142, 51.371494457069495) -- (12.223764747762653, 51.371017158467502) -- (12.221369018889131, 51.370632276045576) -- (12.218067307565756, 51.369682497044671) -- (12.217878274688685, 51.369628115558939) -- (12.216799331652942, 51.371601486104396) -- (12.214367642473924, 51.371217582450072) -- (12.213139333919457, 51.371575158867969) -- (12.214399114541607, 51.371993547359359) -- (12.214432130483928, 51.373183108670702) -- (12.214825602475132, 51.374585306376218) -- (12.21352525238807, 51.374895045775048) -- (12.212472506513214, 51.374112315159834) -- (12.210920030681038, 51.374300700937731) -- (12.208855089210765, 51.373954015339706) -- (12.207395123716301, 51.373868363441403) -- (12.206677229822247, 51.37419202589119) -- (12.205979325200865, 51.376071468299806) -- (12.206664951934053, 51.37864800143619) -- (12.209164683146307, 51.377915459998064) -- (12.21003067526938, 51.378693549586004) -- (12.208660226043499, 51.379480155630617) -- (12.207316992164763, 51.379880122418115) -- (12.205355005800122, 51.380217142385057) -- (12.203485523737612, 51.380306993062355) -- (12.196573449841249, 51.380038137495852) -- (12.189680711363202, 51.378930685215899) -- (12.188973074299239, 51.380239222226272) -- (12.190193915401418, 51.381888291124341) -- (12.189592515705305, 51.382675243390416) -- (12.189272732398623, 51.383785672531978) -- (12.190384481971387, 51.382922087469261) -- (12.192672664883768, 51.383231299206834) -- (12.193606339068129, 51.38480885197319) -- (12.190887846164504, 51.384904201159678) -- (12.19158950649085, 51.385969259563645) -- (12.192863426489581, 51.386162137524138) -- (12.192311861490763, 51.386764643605304) -- (12.192998436574038, 51.387582547443245) -- (12.194429852690803, 51.387550753040934) -- (12.19494056629668, 51.387541663233883) -- (12.195577605755629, 51.388517024151348) -- (12.196484968817158, 51.388737181490043) -- (12.198515207773884, 51.389092182978622) -- (12.198920615257807, 51.389676584445702) -- (12.200068012383136, 51.388835931904715) -- (12.200945775506133, 51.388858438214989) -- (12.201344427512213, 51.389532811064811) -- (12.202510721082682, 51.389721734104974) -- (12.203342958427397, 51.3892183046855) -- (12.202590734446687, 51.388305569265626) -- (12.204262696533192, 51.387366094075006) -- (12.205623761751248, 51.386075892373796) -- (12.206976600504332, 51.386202012809129) -- (12.208198227474099, 51.385936980417334) -- (12.21058531553262, 51.384103026983574) -- (12.211701154056778, 51.384337165409981) -- (12.211780736382146, 51.386050344368329) -- (12.215134025612725, 51.386388744007398) -- (12.216442602397949, 51.384734555916474) -- (12.218911314728679, 51.385140302577859) -- (12.220254642979143, 51.384808187210901) -- (12.220820148197198, 51.382650182034233) -- (12.223481853697647, 51.382858396841108) -- (12.223767471441324, 51.382408927078131) -- (12.225067750141895, 51.382121896291295) -- (12.225287937646275, 51.381510548831692) -- (12.226181343522086, 51.381902228068697) -- (12.227799005632077,
 51.381849234565024) -- (12.229384499486233, 51.381045459890373) -- (12.229813750119721, 51.380555707046931) -- (12.229874475161107, 51.379643142340882) -- (12.230117195541764, 51.379238710595864) -- (12.231849210872458, 51.379055597880139) -- (12.232584834407923, 51.379577555532812) -- (12.23302526762907, 51.378332801639495) -- (12.235542600598729, 51.378559573089518) -- (12.237263495397762, 51.379176422542287) -- (12.238939239387522, 51.379335028002657) -- (12.24180027646884, 51.379315207596647) -- (12.241696926658449, 51.380343292881619) -- (12.243075743052062, 51.380034492603926) -- (12.243473315602097, 51.378881899864723) -- (12.246480180663648, 51.379418772581886) -- (12.246393296626321, 51.380921424997098) -- (12.248846816864127, 51.38151110727091) -- (12.247146944640511, 51.382287280461576) -- (12.246023435322801, 51.38335612324277) -- (12.246097988045527, 51.383926783297113) -- (12.246522753244609, 51.384075217037221);

        \end{tikzpicture}
    }
}~\subfigure[MBB construction]{\label{fig:leipzig2}
    \resizebox{0.45\linewidth}{!}{
        \begin{tikzpicture}[scale=3]
        % [spy using outlines={red, circle, magnification=4, size=.3cm, connect spies}]

        \draw[fill=Gray!20!white, line width=0.01pt] (12.24334845000007, 51.348402) -- (12.260675450000065, 51.42005750000007) -- (12.45318745000003, 51.44363850000005) -- (12.498737950000077, 51.36050450000005) -- (12.5391649500001, 51.3063565) -- (12.253649450000097, 51.23853550000001) -- (12.24334845000007, 51.348402);

        \draw[fill=Green, line width=0.01pt]
        (12.350457680302574, 51.335741860486834) -- (12.352140928324896, 51.333637635911934) -- (12.352439235513609, 51.333260718584498) -- (12.35313583488478, 51.332573109900757) -- (12.353698084666826, 51.332288104893777) -- (12.354745816406369, 51.332356893302503) -- (12.357841093431484, 51.333607855296478) -- (12.362974379474052, 51.335692068272955) -- (12.363614519530703, 51.335366147209989) -- (12.365929215946238, 51.335501366978214) -- (12.366311547463905, 51.334507614038031) -- (12.364308429206481, 51.333575625095392) -- (12.359845102391926, 51.331504636298114) -- (12.363055148685051, 51.33018700453627) -- (12.364807826900106, 51.325985154216667) -- (12.361284026250093, 51.327881992318808) -- (12.359637477398138, 51.327576574910601) -- (12.358879859135268, 51.326425098612404) -- (12.357259861513205, 51.325364351195581) -- (12.357642647938114, 51.324289466492395) -- (12.357946704334694, 51.322987080911844) -- (12.358780590768848, 51.318658908839012) -- (12.358634489819959, 51.316495595865717) -- (12.360475078143981, 51.316603469281716) -- (12.3617195173801, 51.319237258521404) -- (12.36541261590396, 51.318790499281072) -- (12.365630387854567, 51.313999496266284) -- (12.362997395756869, 51.313763242389527) -- (12.363995000516482, 51.311899434919425) -- (12.365051262574289, 51.311874076620185) -- (12.365607273029472, 51.310066772254423) -- (12.367888419286293, 51.305096585224398) -- (12.369674573591915, 51.304967455872415) -- (12.372537273655375, 51.302571373158798) -- (12.373086928998044, 51.301165079604004) -- (12.379545575204549, 51.299666971379082) -- (12.37872328759245, 51.298212804499961) -- (12.378921390603191, 51.29727894345843) -- (12.379763225230846, 51.295991952475191) -- (12.381119015331068, 51.294987706263804) -- (12.383924177183824, 51.294093603416748) -- (12.386512362349434, 51.293660424501084) -- (12.389192648396918, 51.290668357656699) -- (12.387995633329179, 51.289668118388406) -- (12.387473747627904, 51.290413888067413) -- (12.385350791642892, 51.290262653493834) -- (12.385057800671893, 51.289804036134811) -- (12.381065358211419, 51.288987702341863) -- (12.377334617220791, 51.287971108068191) -- (12.375926843008681, 51.291272742814236) -- (12.374910342900066, 51.292508760459086) -- (12.37300172153798, 51.292955671587016) -- (12.371290236206972, 51.293153338091798) -- (12.370511859464582, 51.291110323034957) -- (12.369462259533334, 51.290022259850353) -- (12.36454034937703, 51.289959109934649) -- (12.362420974717045, 51.291903825165875) -- (12.360083167437903, 51.292796369108409) -- (12.358273361117101, 51.292718268534998) -- (12.357712950055422, 51.291838759381911) -- (12.355791296729445, 51.293606601141796) -- (12.354171523467116, 51.294559231622756) -- (12.352414308467877, 51.295301425808766) -- (12.350630974187782, 51.295432018276152) -- (12.348374522962821, 51.295403685406008) -- (12.342410999386157, 51.294801344483417) -- (12.343446198950682, 51.297354061499775) -- (12.342943438819521, 51.30071634443685) -- (12.344034825445529, 51.304040806650157) -- (12.343555532700668, 51.306082737522757) -- (12.343278938466336, 51.306738587359575) -- (12.345592045420219, 51.308126661912382) -- (12.346437286017204, 51.308793267939492) -- (12.347353938594205, 51.309508555988543) -- (12.348880224960007, 51.311222686411064) -- (12.350055199360009, 51.312683311402537) -- (12.351202153992402, 51.314568733611019) -- (12.351935129380868, 51.316675396602967) -- (12.352188800182851, 51.318364204327793) -- (12.351940213359418, 51.318853258844456) -- (12.351641954959257, 51.31944097097989) -- (12.350388470599748, 51.320347897807686) -- (12.348172259629171, 51.321417335084192) -- (12.345990332673312, 51.322030266067316) -- (12.346490708957655, 51.324833275483222) -- (12.347353460398706, 51.327208587621207) -- (12.34836525703229, 51.329421467505341) -- (12.348559679448288, 51.332634488908042) -- (12.344419357022041, 51.333669329491201) -- (12.342416238561107, 51.334167976991068) -- (12.340703846780286, 51.334227278551197) -- (12.341664189855061, 51.335511184311514) -- (12.342043977133882, 51.336022046304308) -- (12.
342270644067744, 51.337729514149935) -- (12.342342068056602, 51.337755856622543) -- (12.346176128477437, 51.339026334031388) -- (12.346268259457363, 51.338851763161273) -- (12.347353684019337, 51.33710045182044) -- (12.348738598565443, 51.335337963082502) -- (12.350457680302574, 51.335741860486834);

        \draw[fill=NavyBlue, line width=0.01pt] (12.246522753244609, 51.384075217037221) -- (12.247393832331706, 51.384372062492751) -- (12.249163829478919, 51.384785924286867) -- (12.249415284430482, 51.384768019121104) -- (12.250445526525292, 51.385298359110479) -- (12.25147057030267, 51.384750516173575) -- (12.251535098380623, 51.384719085433908) -- (12.249654309392385, 51.382077249513983) -- (12.250816656546887, 51.381651442548353) -- (12.252197359372134, 51.381849688619759) -- (12.25271828098597, 51.381020470191942) -- (12.253766194393881, 51.380730110400258) -- (12.253888913368309, 51.377456252920126) -- (12.259271130637494, 51.375722038656932) -- (12.260885093595359, 51.375061607050284) -- (12.262352700674414, 51.375576271034383) -- (12.264159634897592, 51.376532200546407) -- (12.265236264932071, 51.376311137787653) -- (12.266335283287249, 51.378085757914015) -- (12.268697296117351, 51.379555188036996) -- (12.27108572302801, 51.378798948027466) -- (12.270887879624965, 51.377845254013053) -- (12.272487634123451, 51.377432333310985) -- (12.273498662599771, 51.375288410183721) -- (12.27380881647413, 51.373826273197267) -- (12.27313748501723, 51.371055063531912) -- (12.273277240542997, 51.370043659351246) -- (12.274438352045674, 51.36950772665967) -- (12.276010701822814, 51.36952599330121) -- (12.276695558642889, 51.370266910024725) -- (12.27920018445908, 51.370102219271558) -- (12.280096156493189, 51.369786586921151) -- (12.280868623471511, 51.369106962205244) -- (12.282152852820929, 51.369003327444418) -- (12.282909695584273, 51.369832290103048) -- (12.282583808915742, 51.370885012220768) -- (12.285480368612724, 51.370253988591884) -- (12.286542245745107, 51.371971310788872) -- (12.284743954871489, 51.372654544283556) -- (12.285617613345947, 51.373942359654343) -- (12.283289090612593, 51.374879011447234) -- (12.281813680648778, 51.373893277483816) -- (12.279653525205013, 51.375761525447771) -- (12.280236922803141, 51.376232879931557) -- (12.280870718316358, 51.376739875609651) -- (12.282016629140035, 51.377834282255868) -- (12.282428735362069, 51.378533158196767) -- (12.283810966985126, 51.377810793806972) -- (12.284551938088933, 51.378210890321448) -- (12.285447715690598, 51.377893797951408) -- (12.28726608794922, 51.37685398694606) -- (12.288633682489589, 51.376161590354052) -- (12.289334172401844, 51.377193200583164) -- (12.290954638092328, 51.376940291064393) -- (12.29130423239611, 51.378115618172401) -- (12.296281390888996, 51.377762572473443) -- (12.29969187538671, 51.376950178516097) -- (12.299987086197998, 51.375641304176433) -- (12.302524948741569, 51.375351667977668) -- (12.306842510490204, 51.37525786524693) -- (12.310553507338504, 51.374333876474012) -- (12.312428592008983, 51.373491247994451) -- (12.314735468685871, 51.374610329100456) -- (12.319036580904704, 51.371726907262691) -- (12.320019886486373, 51.370566273950537) -- (12.321066342504787, 51.370727793677368) -- (12.322451702406559, 51.371437542380377) -- (12.324263967567594, 51.371636206215392) -- (12.325831351407274, 51.37160326616663) -- (12.32882323882083, 51.370348891019866) -- (12.329632057738772, 51.368982837480445) -- (12.333575239733351, 51.368290860162546) -- (12.335622230574739, 51.367495138798176) -- (12.336847136310912, 51.366712047608736) -- (12.336000701166544, 51.366001475159202) -- (12.338461789447825, 51.360832288791457) -- (12.341129193123255, 51.36186309942056) -- (12.34195650811335, 51.360224780975727) -- (12.344341513596744, 51.359113696037994) -- (12.345136446831756, 51.359787829310676) -- (12.346735493632552, 51.358573530505737) -- (12.347600792737344, 51.357568696531743) -- (12.35015624313213, 51.35671011433756) -- (12.352275044928382, 51.35711708078712) -- (12.355485947000082, 51.354999292113362) -- (12.35832209243811, 51.35662900342497) -- (12.359445439245739, 51.35618882106079) -- (12.362241222487967, 51.355492279136996) -- (12.364495716724051, 51.354852519015076) -- (12.365319549672318, 51.354396114121855) -- (12.367677737660886, 51.353165934808551) -- (12.369392736796422, 51.352926756915281) -- (12.36984169500715, 51.
351503681367561) -- (12.368564823621767, 51.351003258007282) -- (12.369223947483182, 51.350311194184769) -- (12.370352998643272, 51.349606977605731) -- (12.369851845600937, 51.347190620284501) -- (12.368009434018132, 51.347609557387749) -- (12.365737688287201, 51.34748229015446) -- (12.361164044798107, 51.346848302016092) -- (12.355994746647676, 51.349020015171874) -- (12.357658987482621, 51.350182885602777) -- (12.354004340495889, 51.350090335631627) -- (12.353282344220446, 51.351255093780189) -- (12.348831933359381, 51.350354508052533) -- (12.346311976552157, 51.353051007964702) -- (12.344517853737383, 51.35199511390072) -- (12.343404339602083, 51.351765068543145) -- (12.341345629308595, 51.352508274911536) -- (12.341076677906312, 51.353058477379868) -- (12.340291658571687, 51.354798218401335) -- (12.33977537650968, 51.355934269469216) -- (12.338198657694754, 51.356107697058178) -- (12.336758592041136, 51.358323816021944) -- (12.335715219317326, 51.358146109064059) -- (12.33651441206519, 51.356277026331561) -- (12.33353315389297, 51.357313588586649) -- (12.333122373602066, 51.356436679959451) -- (12.33723804256798, 51.355135645118878) -- (12.338312617804823, 51.354295765937707) -- (12.342005117713381, 51.347700461785969) -- (12.339771743403764, 51.347659560881731) -- (12.339021377452122, 51.34736935182292) -- (12.338145351555761, 51.343882052985677) -- (12.334104078488531, 51.343939429636244) -- (12.334208436228492, 51.345165267445587) -- (12.327811416678017, 51.345031704900158) -- (12.326931383170262, 51.346895114437444) -- (12.327814971934759, 51.34795678434886) -- (12.326212706909454, 51.347923861387216) -- (12.323977965678798, 51.351117154271265) -- (12.32599683224743, 51.351411765920325) -- (12.326634728325402, 51.351743415338447) -- (12.326603265275594, 51.352876257338892) -- (12.325826521488141, 51.353651145174013) -- (12.326372348581211, 51.356043841924567) -- (12.323706028339783, 51.356692835029129) -- (12.318572767864996, 51.356935508239829) -- (12.318365016232223, 51.356940722877987) -- (12.317838545365285, 51.357879263415711) -- (12.314636157536155, 51.358260729602989) -- (12.313942772650453, 51.356979426343052) -- (12.311870153345206, 51.356847496509239) -- (12.311741000486265, 51.356825949081063) -- (12.311414890693008, 51.357763935971192) -- (12.309736525560789, 51.357667320822657) -- (12.308901940797451, 51.358782565020512) -- (12.30860839859281, 51.360683275497514) -- (12.302067429369602, 51.36072216101087) -- (12.300204670240651, 51.361253211641476) -- (12.29897523894412, 51.362295542285459) -- (12.294893281479373, 51.362134065384943) -- (12.292976647459856, 51.361755851946569) -- (12.29247888874972, 51.364507023295516) -- (12.291283218060554, 51.365002892837012) -- (12.289297321331267, 51.365190711121755) -- (12.287333893244178, 51.365643213027894) -- (12.286573135568434, 51.367297759182676) -- (12.283978188948126, 51.36787279504874) -- (12.283318190508282, 51.366117002088174) -- (12.281746586784671, 51.36608473512895) -- (12.281477511731753, 51.365176116131728) -- (12.28039098594331, 51.364428798439732) -- (12.279227603716308, 51.364217859264741) -- (12.277642273343293, 51.364427907161243) -- (12.274905197989451, 51.363822518882955) -- (12.275001952726143, 51.364743700392999) -- (12.275091059443319, 51.365536595874865) -- (12.271440587291558, 51.366593565979514) -- (12.270422662707242, 51.367062062308619) -- (12.270489858935029, 51.367762191973824) -- (12.270922902640701, 51.368356553461652) -- (12.272059656836335, 51.368951737505981) -- (12.26937012395509, 51.369848641217544) -- (12.267475676358472, 51.370271444919716) -- (12.269696899969295, 51.366928260472775) -- (12.267004223809117, 51.369020067211252) -- (12.265774510392861, 51.370565570680945) -- (12.263152048553426, 51.370404783146533) -- (12.26302620893426, 51.369384238102391) -- (12.263816131530968, 51.367311711123655) -- (12.260364409727808, 51.366292946190796) -- (12.261799952116782, 51.363857880878328) -- (12.259521110551093, 51.363386894067759) -- (12.257257735681563, 51.363300098103672) -- (12.248183285952161, 51.36373488862143) -- (
12.247690183630381, 51.364600939469582) -- (12.246814861646913, 51.36507150134041) -- (12.245807838577376, 51.364836957468761) -- (12.2438414232022, 51.365773327486622) -- (12.242426683808604, 51.366180294147348) -- (12.233934642339268, 51.367788272869532) -- (12.234484927288017, 51.368934273987563) -- (12.233465776278955, 51.369447884088252) -- (12.230922483090223, 51.367880553669963) -- (12.22953548944434, 51.370177079255811) -- (12.230099422680977, 51.37105808334271) -- (12.229842750557417, 51.371660584527703) -- (12.231052857454873, 51.372262195645582) -- (12.230884004297474, 51.373259069857738) -- (12.229302474513062, 51.373332142595928) -- (12.230802855177876, 51.378160297483859) -- (12.229696317568656, 51.378334826817017) -- (12.228186214683037, 51.376558596519125) -- (12.227115492357047, 51.376714832637248) -- (12.225691352683789, 51.376605823482862) -- (12.225578943352197, 51.375464919097098) -- (12.227448529904082, 51.375489440295496) -- (12.227565725248423, 51.374193644304285) -- (12.226155072999806, 51.373837495877531) -- (12.225410889648259, 51.372882074620115) -- (12.225348309546142, 51.371494457069495) -- (12.223764747762653, 51.371017158467502) -- (12.221369018889131, 51.370632276045576) -- (12.218067307565756, 51.369682497044671) -- (12.217878274688685, 51.369628115558939) -- (12.216799331652942, 51.371601486104396) -- (12.214367642473924, 51.371217582450072) -- (12.213139333919457, 51.371575158867969) -- (12.214399114541607, 51.371993547359359) -- (12.214432130483928, 51.373183108670702) -- (12.214825602475132, 51.374585306376218) -- (12.21352525238807, 51.374895045775048) -- (12.212472506513214, 51.374112315159834) -- (12.210920030681038, 51.374300700937731) -- (12.208855089210765, 51.373954015339706) -- (12.207395123716301, 51.373868363441403) -- (12.206677229822247, 51.37419202589119) -- (12.205979325200865, 51.376071468299806) -- (12.206664951934053, 51.37864800143619) -- (12.209164683146307, 51.377915459998064) -- (12.21003067526938, 51.378693549586004) -- (12.208660226043499, 51.379480155630617) -- (12.207316992164763, 51.379880122418115) -- (12.205355005800122, 51.380217142385057) -- (12.203485523737612, 51.380306993062355) -- (12.196573449841249, 51.380038137495852) -- (12.189680711363202, 51.378930685215899) -- (12.188973074299239, 51.380239222226272) -- (12.190193915401418, 51.381888291124341) -- (12.189592515705305, 51.382675243390416) -- (12.189272732398623, 51.383785672531978) -- (12.190384481971387, 51.382922087469261) -- (12.192672664883768, 51.383231299206834) -- (12.193606339068129, 51.38480885197319) -- (12.190887846164504, 51.384904201159678) -- (12.19158950649085, 51.385969259563645) -- (12.192863426489581, 51.386162137524138) -- (12.192311861490763, 51.386764643605304) -- (12.192998436574038, 51.387582547443245) -- (12.194429852690803, 51.387550753040934) -- (12.19494056629668, 51.387541663233883) -- (12.195577605755629, 51.388517024151348) -- (12.196484968817158, 51.388737181490043) -- (12.198515207773884, 51.389092182978622) -- (12.198920615257807, 51.389676584445702) -- (12.200068012383136, 51.388835931904715) -- (12.200945775506133, 51.388858438214989) -- (12.201344427512213, 51.389532811064811) -- (12.202510721082682, 51.389721734104974) -- (12.203342958427397, 51.3892183046855) -- (12.202590734446687, 51.388305569265626) -- (12.204262696533192, 51.387366094075006) -- (12.205623761751248, 51.386075892373796) -- (12.206976600504332, 51.386202012809129) -- (12.208198227474099, 51.385936980417334) -- (12.21058531553262, 51.384103026983574) -- (12.211701154056778, 51.384337165409981) -- (12.211780736382146, 51.386050344368329) -- (12.215134025612725, 51.386388744007398) -- (12.216442602397949, 51.384734555916474) -- (12.218911314728679, 51.385140302577859) -- (12.220254642979143, 51.384808187210901) -- (12.220820148197198, 51.382650182034233) -- (12.223481853697647, 51.382858396841108) -- (12.223767471441324, 51.382408927078131) -- (12.225067750141895, 51.382121896291295) -- (12.225287937646275, 51.381510548831692) -- (12.226181343522086, 51.381902228068697) -- (12.227799005632077,
 51.381849234565024) -- (12.229384499486233, 51.381045459890373) -- (12.229813750119721, 51.380555707046931) -- (12.229874475161107, 51.379643142340882) -- (12.230117195541764, 51.379238710595864) -- (12.231849210872458, 51.379055597880139) -- (12.232584834407923, 51.379577555532812) -- (12.23302526762907, 51.378332801639495) -- (12.235542600598729, 51.378559573089518) -- (12.237263495397762, 51.379176422542287) -- (12.238939239387522, 51.379335028002657) -- (12.24180027646884, 51.379315207596647) -- (12.241696926658449, 51.380343292881619) -- (12.243075743052062, 51.380034492603926) -- (12.243473315602097, 51.378881899864723) -- (12.246480180663648, 51.379418772581886) -- (12.246393296626321, 51.380921424997098) -- (12.248846816864127, 51.38151110727091) -- (12.247146944640511, 51.382287280461576) -- (12.246023435322801, 51.38335612324277) -- (12.246097988045527, 51.383926783297113) -- (12.246522753244609, 51.384075217037221);
        
        % \draw[step=0.5mm,Gray!20!black,line width=0.01pt] (12.18897307429924, 51.2385) grid (12.5392, 51.4436);
        
        \draw[red,ultra thin] (12.18897307429924, 51.34388205298568) -- (12.18897307429924, 51.389721734104974) -- (12.370352998643272, 51.389721734104974) -- (12.370352998643272, 51.34388205298568) -- (12.18897307429924, 51.34388205298568);
        
        \draw[red,ultra thin] (12.340703846780286, 51.28797110806819) -- (12.340703846780286, 51.33902633403139) -- (12.389192648396918, 51.33902633403139) -- (12.389192648396918, 51.28797110806819) -- (12.340703846780286, 51.28797110806819);

        \draw[red,ultra thin] (12.2433, 51.2385) --(12.2433, 51.4436) -- (12.5392, 51.4436) -- (12.5392, 51.2385) -- (12.2433, 51.2385);
        
        \end{tikzpicture}
    }
} \subfigure[Space tiling]{\label{fig:leipzig3}
    \resizebox{0.45\linewidth}{!}{
        \begin{tikzpicture}[scale=3]
        % [spy using outlines={red, circle, magnification=4, size=.3cm, connect spies}]

        \draw[fill=Gray!20!white, line width=0.01pt] (12.24334845000007, 51.348402) -- (12.260675450000065, 51.42005750000007) -- (12.45318745000003, 51.44363850000005) -- (12.498737950000077, 51.36050450000005) -- (12.5391649500001, 51.3063565) -- (12.253649450000097, 51.23853550000001) -- (12.24334845000007, 51.348402);

        \draw[fill=Green, line width=0.01pt]
        (12.350457680302574, 51.335741860486834) -- (12.352140928324896, 51.333637635911934) -- (12.352439235513609, 51.333260718584498) -- (12.35313583488478, 51.332573109900757) -- (12.353698084666826, 51.332288104893777) -- (12.354745816406369, 51.332356893302503) -- (12.357841093431484, 51.333607855296478) -- (12.362974379474052, 51.335692068272955) -- (12.363614519530703, 51.335366147209989) -- (12.365929215946238, 51.335501366978214) -- (12.366311547463905, 51.334507614038031) -- (12.364308429206481, 51.333575625095392) -- (12.359845102391926, 51.331504636298114) -- (12.363055148685051, 51.33018700453627) -- (12.364807826900106, 51.325985154216667) -- (12.361284026250093, 51.327881992318808) -- (12.359637477398138, 51.327576574910601) -- (12.358879859135268, 51.326425098612404) -- (12.357259861513205, 51.325364351195581) -- (12.357642647938114, 51.324289466492395) -- (12.357946704334694, 51.322987080911844) -- (12.358780590768848, 51.318658908839012) -- (12.358634489819959, 51.316495595865717) -- (12.360475078143981, 51.316603469281716) -- (12.3617195173801, 51.319237258521404) -- (12.36541261590396, 51.318790499281072) -- (12.365630387854567, 51.313999496266284) -- (12.362997395756869, 51.313763242389527) -- (12.363995000516482, 51.311899434919425) -- (12.365051262574289, 51.311874076620185) -- (12.365607273029472, 51.310066772254423) -- (12.367888419286293, 51.305096585224398) -- (12.369674573591915, 51.304967455872415) -- (12.372537273655375, 51.302571373158798) -- (12.373086928998044, 51.301165079604004) -- (12.379545575204549, 51.299666971379082) -- (12.37872328759245, 51.298212804499961) -- (12.378921390603191, 51.29727894345843) -- (12.379763225230846, 51.295991952475191) -- (12.381119015331068, 51.294987706263804) -- (12.383924177183824, 51.294093603416748) -- (12.386512362349434, 51.293660424501084) -- (12.389192648396918, 51.290668357656699) -- (12.387995633329179, 51.289668118388406) -- (12.387473747627904, 51.290413888067413) -- (12.385350791642892, 51.290262653493834) -- (12.385057800671893, 51.289804036134811) -- (12.381065358211419, 51.288987702341863) -- (12.377334617220791, 51.287971108068191) -- (12.375926843008681, 51.291272742814236) -- (12.374910342900066, 51.292508760459086) -- (12.37300172153798, 51.292955671587016) -- (12.371290236206972, 51.293153338091798) -- (12.370511859464582, 51.291110323034957) -- (12.369462259533334, 51.290022259850353) -- (12.36454034937703, 51.289959109934649) -- (12.362420974717045, 51.291903825165875) -- (12.360083167437903, 51.292796369108409) -- (12.358273361117101, 51.292718268534998) -- (12.357712950055422, 51.291838759381911) -- (12.355791296729445, 51.293606601141796) -- (12.354171523467116, 51.294559231622756) -- (12.352414308467877, 51.295301425808766) -- (12.350630974187782, 51.295432018276152) -- (12.348374522962821, 51.295403685406008) -- (12.342410999386157, 51.294801344483417) -- (12.343446198950682, 51.297354061499775) -- (12.342943438819521, 51.30071634443685) -- (12.344034825445529, 51.304040806650157) -- (12.343555532700668, 51.306082737522757) -- (12.343278938466336, 51.306738587359575) -- (12.345592045420219, 51.308126661912382) -- (12.346437286017204, 51.308793267939492) -- (12.347353938594205, 51.309508555988543) -- (12.348880224960007, 51.311222686411064) -- (12.350055199360009, 51.312683311402537) -- (12.351202153992402, 51.314568733611019) -- (12.351935129380868, 51.316675396602967) -- (12.352188800182851, 51.318364204327793) -- (12.351940213359418, 51.318853258844456) -- (12.351641954959257, 51.31944097097989) -- (12.350388470599748, 51.320347897807686) -- (12.348172259629171, 51.321417335084192) -- (12.345990332673312, 51.322030266067316) -- (12.346490708957655, 51.324833275483222) -- (12.347353460398706, 51.327208587621207) -- (12.34836525703229, 51.329421467505341) -- (12.348559679448288, 51.332634488908042) -- (12.344419357022041, 51.333669329491201) -- (12.342416238561107, 51.334167976991068) -- (12.340703846780286, 51.334227278551197) -- (12.341664189855061, 51.335511184311514) -- (12.342043977133882, 51.336022046304308) -- (12.
342270644067744, 51.337729514149935) -- (12.342342068056602, 51.337755856622543) -- (12.346176128477437, 51.339026334031388) -- (12.346268259457363, 51.338851763161273) -- (12.347353684019337, 51.33710045182044) -- (12.348738598565443, 51.335337963082502) -- (12.350457680302574, 51.335741860486834);

        \draw[fill=NavyBlue, line width=0.01pt] (12.246522753244609, 51.384075217037221) -- (12.247393832331706, 51.384372062492751) -- (12.249163829478919, 51.384785924286867) -- (12.249415284430482, 51.384768019121104) -- (12.250445526525292, 51.385298359110479) -- (12.25147057030267, 51.384750516173575) -- (12.251535098380623, 51.384719085433908) -- (12.249654309392385, 51.382077249513983) -- (12.250816656546887, 51.381651442548353) -- (12.252197359372134, 51.381849688619759) -- (12.25271828098597, 51.381020470191942) -- (12.253766194393881, 51.380730110400258) -- (12.253888913368309, 51.377456252920126) -- (12.259271130637494, 51.375722038656932) -- (12.260885093595359, 51.375061607050284) -- (12.262352700674414, 51.375576271034383) -- (12.264159634897592, 51.376532200546407) -- (12.265236264932071, 51.376311137787653) -- (12.266335283287249, 51.378085757914015) -- (12.268697296117351, 51.379555188036996) -- (12.27108572302801, 51.378798948027466) -- (12.270887879624965, 51.377845254013053) -- (12.272487634123451, 51.377432333310985) -- (12.273498662599771, 51.375288410183721) -- (12.27380881647413, 51.373826273197267) -- (12.27313748501723, 51.371055063531912) -- (12.273277240542997, 51.370043659351246) -- (12.274438352045674, 51.36950772665967) -- (12.276010701822814, 51.36952599330121) -- (12.276695558642889, 51.370266910024725) -- (12.27920018445908, 51.370102219271558) -- (12.280096156493189, 51.369786586921151) -- (12.280868623471511, 51.369106962205244) -- (12.282152852820929, 51.369003327444418) -- (12.282909695584273, 51.369832290103048) -- (12.282583808915742, 51.370885012220768) -- (12.285480368612724, 51.370253988591884) -- (12.286542245745107, 51.371971310788872) -- (12.284743954871489, 51.372654544283556) -- (12.285617613345947, 51.373942359654343) -- (12.283289090612593, 51.374879011447234) -- (12.281813680648778, 51.373893277483816) -- (12.279653525205013, 51.375761525447771) -- (12.280236922803141, 51.376232879931557) -- (12.280870718316358, 51.376739875609651) -- (12.282016629140035, 51.377834282255868) -- (12.282428735362069, 51.378533158196767) -- (12.283810966985126, 51.377810793806972) -- (12.284551938088933, 51.378210890321448) -- (12.285447715690598, 51.377893797951408) -- (12.28726608794922, 51.37685398694606) -- (12.288633682489589, 51.376161590354052) -- (12.289334172401844, 51.377193200583164) -- (12.290954638092328, 51.376940291064393) -- (12.29130423239611, 51.378115618172401) -- (12.296281390888996, 51.377762572473443) -- (12.29969187538671, 51.376950178516097) -- (12.299987086197998, 51.375641304176433) -- (12.302524948741569, 51.375351667977668) -- (12.306842510490204, 51.37525786524693) -- (12.310553507338504, 51.374333876474012) -- (12.312428592008983, 51.373491247994451) -- (12.314735468685871, 51.374610329100456) -- (12.319036580904704, 51.371726907262691) -- (12.320019886486373, 51.370566273950537) -- (12.321066342504787, 51.370727793677368) -- (12.322451702406559, 51.371437542380377) -- (12.324263967567594, 51.371636206215392) -- (12.325831351407274, 51.37160326616663) -- (12.32882323882083, 51.370348891019866) -- (12.329632057738772, 51.368982837480445) -- (12.333575239733351, 51.368290860162546) -- (12.335622230574739, 51.367495138798176) -- (12.336847136310912, 51.366712047608736) -- (12.336000701166544, 51.366001475159202) -- (12.338461789447825, 51.360832288791457) -- (12.341129193123255, 51.36186309942056) -- (12.34195650811335, 51.360224780975727) -- (12.344341513596744, 51.359113696037994) -- (12.345136446831756, 51.359787829310676) -- (12.346735493632552, 51.358573530505737) -- (12.347600792737344, 51.357568696531743) -- (12.35015624313213, 51.35671011433756) -- (12.352275044928382, 51.35711708078712) -- (12.355485947000082, 51.354999292113362) -- (12.35832209243811, 51.35662900342497) -- (12.359445439245739, 51.35618882106079) -- (12.362241222487967, 51.355492279136996) -- (12.364495716724051, 51.354852519015076) -- (12.365319549672318, 51.354396114121855) -- (12.367677737660886, 51.353165934808551) -- (12.369392736796422, 51.352926756915281) -- (12.36984169500715, 51.
351503681367561) -- (12.368564823621767, 51.351003258007282) -- (12.369223947483182, 51.350311194184769) -- (12.370352998643272, 51.349606977605731) -- (12.369851845600937, 51.347190620284501) -- (12.368009434018132, 51.347609557387749) -- (12.365737688287201, 51.34748229015446) -- (12.361164044798107, 51.346848302016092) -- (12.355994746647676, 51.349020015171874) -- (12.357658987482621, 51.350182885602777) -- (12.354004340495889, 51.350090335631627) -- (12.353282344220446, 51.351255093780189) -- (12.348831933359381, 51.350354508052533) -- (12.346311976552157, 51.353051007964702) -- (12.344517853737383, 51.35199511390072) -- (12.343404339602083, 51.351765068543145) -- (12.341345629308595, 51.352508274911536) -- (12.341076677906312, 51.353058477379868) -- (12.340291658571687, 51.354798218401335) -- (12.33977537650968, 51.355934269469216) -- (12.338198657694754, 51.356107697058178) -- (12.336758592041136, 51.358323816021944) -- (12.335715219317326, 51.358146109064059) -- (12.33651441206519, 51.356277026331561) -- (12.33353315389297, 51.357313588586649) -- (12.333122373602066, 51.356436679959451) -- (12.33723804256798, 51.355135645118878) -- (12.338312617804823, 51.354295765937707) -- (12.342005117713381, 51.347700461785969) -- (12.339771743403764, 51.347659560881731) -- (12.339021377452122, 51.34736935182292) -- (12.338145351555761, 51.343882052985677) -- (12.334104078488531, 51.343939429636244) -- (12.334208436228492, 51.345165267445587) -- (12.327811416678017, 51.345031704900158) -- (12.326931383170262, 51.346895114437444) -- (12.327814971934759, 51.34795678434886) -- (12.326212706909454, 51.347923861387216) -- (12.323977965678798, 51.351117154271265) -- (12.32599683224743, 51.351411765920325) -- (12.326634728325402, 51.351743415338447) -- (12.326603265275594, 51.352876257338892) -- (12.325826521488141, 51.353651145174013) -- (12.326372348581211, 51.356043841924567) -- (12.323706028339783, 51.356692835029129) -- (12.318572767864996, 51.356935508239829) -- (12.318365016232223, 51.356940722877987) -- (12.317838545365285, 51.357879263415711) -- (12.314636157536155, 51.358260729602989) -- (12.313942772650453, 51.356979426343052) -- (12.311870153345206, 51.356847496509239) -- (12.311741000486265, 51.356825949081063) -- (12.311414890693008, 51.357763935971192) -- (12.309736525560789, 51.357667320822657) -- (12.308901940797451, 51.358782565020512) -- (12.30860839859281, 51.360683275497514) -- (12.302067429369602, 51.36072216101087) -- (12.300204670240651, 51.361253211641476) -- (12.29897523894412, 51.362295542285459) -- (12.294893281479373, 51.362134065384943) -- (12.292976647459856, 51.361755851946569) -- (12.29247888874972, 51.364507023295516) -- (12.291283218060554, 51.365002892837012) -- (12.289297321331267, 51.365190711121755) -- (12.287333893244178, 51.365643213027894) -- (12.286573135568434, 51.367297759182676) -- (12.283978188948126, 51.36787279504874) -- (12.283318190508282, 51.366117002088174) -- (12.281746586784671, 51.36608473512895) -- (12.281477511731753, 51.365176116131728) -- (12.28039098594331, 51.364428798439732) -- (12.279227603716308, 51.364217859264741) -- (12.277642273343293, 51.364427907161243) -- (12.274905197989451, 51.363822518882955) -- (12.275001952726143, 51.364743700392999) -- (12.275091059443319, 51.365536595874865) -- (12.271440587291558, 51.366593565979514) -- (12.270422662707242, 51.367062062308619) -- (12.270489858935029, 51.367762191973824) -- (12.270922902640701, 51.368356553461652) -- (12.272059656836335, 51.368951737505981) -- (12.26937012395509, 51.369848641217544) -- (12.267475676358472, 51.370271444919716) -- (12.269696899969295, 51.366928260472775) -- (12.267004223809117, 51.369020067211252) -- (12.265774510392861, 51.370565570680945) -- (12.263152048553426, 51.370404783146533) -- (12.26302620893426, 51.369384238102391) -- (12.263816131530968, 51.367311711123655) -- (12.260364409727808, 51.366292946190796) -- (12.261799952116782, 51.363857880878328) -- (12.259521110551093, 51.363386894067759) -- (12.257257735681563, 51.363300098103672) -- (12.248183285952161, 51.36373488862143) -- (
12.247690183630381, 51.364600939469582) -- (12.246814861646913, 51.36507150134041) -- (12.245807838577376, 51.364836957468761) -- (12.2438414232022, 51.365773327486622) -- (12.242426683808604, 51.366180294147348) -- (12.233934642339268, 51.367788272869532) -- (12.234484927288017, 51.368934273987563) -- (12.233465776278955, 51.369447884088252) -- (12.230922483090223, 51.367880553669963) -- (12.22953548944434, 51.370177079255811) -- (12.230099422680977, 51.37105808334271) -- (12.229842750557417, 51.371660584527703) -- (12.231052857454873, 51.372262195645582) -- (12.230884004297474, 51.373259069857738) -- (12.229302474513062, 51.373332142595928) -- (12.230802855177876, 51.378160297483859) -- (12.229696317568656, 51.378334826817017) -- (12.228186214683037, 51.376558596519125) -- (12.227115492357047, 51.376714832637248) -- (12.225691352683789, 51.376605823482862) -- (12.225578943352197, 51.375464919097098) -- (12.227448529904082, 51.375489440295496) -- (12.227565725248423, 51.374193644304285) -- (12.226155072999806, 51.373837495877531) -- (12.225410889648259, 51.372882074620115) -- (12.225348309546142, 51.371494457069495) -- (12.223764747762653, 51.371017158467502) -- (12.221369018889131, 51.370632276045576) -- (12.218067307565756, 51.369682497044671) -- (12.217878274688685, 51.369628115558939) -- (12.216799331652942, 51.371601486104396) -- (12.214367642473924, 51.371217582450072) -- (12.213139333919457, 51.371575158867969) -- (12.214399114541607, 51.371993547359359) -- (12.214432130483928, 51.373183108670702) -- (12.214825602475132, 51.374585306376218) -- (12.21352525238807, 51.374895045775048) -- (12.212472506513214, 51.374112315159834) -- (12.210920030681038, 51.374300700937731) -- (12.208855089210765, 51.373954015339706) -- (12.207395123716301, 51.373868363441403) -- (12.206677229822247, 51.37419202589119) -- (12.205979325200865, 51.376071468299806) -- (12.206664951934053, 51.37864800143619) -- (12.209164683146307, 51.377915459998064) -- (12.21003067526938, 51.378693549586004) -- (12.208660226043499, 51.379480155630617) -- (12.207316992164763, 51.379880122418115) -- (12.205355005800122, 51.380217142385057) -- (12.203485523737612, 51.380306993062355) -- (12.196573449841249, 51.380038137495852) -- (12.189680711363202, 51.378930685215899) -- (12.188973074299239, 51.380239222226272) -- (12.190193915401418, 51.381888291124341) -- (12.189592515705305, 51.382675243390416) -- (12.189272732398623, 51.383785672531978) -- (12.190384481971387, 51.382922087469261) -- (12.192672664883768, 51.383231299206834) -- (12.193606339068129, 51.38480885197319) -- (12.190887846164504, 51.384904201159678) -- (12.19158950649085, 51.385969259563645) -- (12.192863426489581, 51.386162137524138) -- (12.192311861490763, 51.386764643605304) -- (12.192998436574038, 51.387582547443245) -- (12.194429852690803, 51.387550753040934) -- (12.19494056629668, 51.387541663233883) -- (12.195577605755629, 51.388517024151348) -- (12.196484968817158, 51.388737181490043) -- (12.198515207773884, 51.389092182978622) -- (12.198920615257807, 51.389676584445702) -- (12.200068012383136, 51.388835931904715) -- (12.200945775506133, 51.388858438214989) -- (12.201344427512213, 51.389532811064811) -- (12.202510721082682, 51.389721734104974) -- (12.203342958427397, 51.3892183046855) -- (12.202590734446687, 51.388305569265626) -- (12.204262696533192, 51.387366094075006) -- (12.205623761751248, 51.386075892373796) -- (12.206976600504332, 51.386202012809129) -- (12.208198227474099, 51.385936980417334) -- (12.21058531553262, 51.384103026983574) -- (12.211701154056778, 51.384337165409981) -- (12.211780736382146, 51.386050344368329) -- (12.215134025612725, 51.386388744007398) -- (12.216442602397949, 51.384734555916474) -- (12.218911314728679, 51.385140302577859) -- (12.220254642979143, 51.384808187210901) -- (12.220820148197198, 51.382650182034233) -- (12.223481853697647, 51.382858396841108) -- (12.223767471441324, 51.382408927078131) -- (12.225067750141895, 51.382121896291295) -- (12.225287937646275, 51.381510548831692) -- (12.226181343522086, 51.381902228068697) -- (12.227799005632077,
 51.381849234565024) -- (12.229384499486233, 51.381045459890373) -- (12.229813750119721, 51.380555707046931) -- (12.229874475161107, 51.379643142340882) -- (12.230117195541764, 51.379238710595864) -- (12.231849210872458, 51.379055597880139) -- (12.232584834407923, 51.379577555532812) -- (12.23302526762907, 51.378332801639495) -- (12.235542600598729, 51.378559573089518) -- (12.237263495397762, 51.379176422542287) -- (12.238939239387522, 51.379335028002657) -- (12.24180027646884, 51.379315207596647) -- (12.241696926658449, 51.380343292881619) -- (12.243075743052062, 51.380034492603926) -- (12.243473315602097, 51.378881899864723) -- (12.246480180663648, 51.379418772581886) -- (12.246393296626321, 51.380921424997098) -- (12.248846816864127, 51.38151110727091) -- (12.247146944640511, 51.382287280461576) -- (12.246023435322801, 51.38335612324277) -- (12.246097988045527, 51.383926783297113) -- (12.246522753244609, 51.384075217037221);
        
        \draw[step=0.5mm,Gray!20!black,line width=0.01pt] (12.18897307429924, 51.2385) grid (12.5392, 51.4436);
        
        \draw[fill=YellowGreen!80!white,YellowGreen!50!white,opacity=0.5,line width=0.01pt] (12.3, 51.25) -- (12.3, 51.35) -- (12.4, 51.35) -- (12.4, 51.25) -- (12.3, 51.25);
        
        \draw[YellowGreen!80!white, fill=YellowGreen!80!white, pattern=south west lines,pattern color=YellowGreen!20!black,opacity=0.5,line width=0.01pt] (12.3, 51.25) -- (12.3, 51.35) -- (12.4, 51.35) -- (12.4, 51.25) -- (12.3, 51.25);
        
        % \draw[YellowGreen!80!white, fill=YellowGreen!80!white, pattern=south west lines,pattern color=YellowGreen!20!black,opacity=0.5,line width=0.01pt]
        
        \draw[red,ultra thin] (12.18897307429924, 51.34388205298568) -- (12.18897307429924, 51.389721734104974) -- (12.370352998643272, 51.389721734104974) -- (12.370352998643272, 51.34388205298568) -- (12.18897307429924, 51.34388205298568);
        
        \draw[red,ultra thin] (12.340703846780286, 51.28797110806819) -- (12.340703846780286, 51.33902633403139) -- (12.389192648396918, 51.33902633403139) -- (12.389192648396918, 51.28797110806819) -- (12.340703846780286, 51.28797110806819);

        \draw[red,ultra thin] (12.2433, 51.2385) --(12.2433, 51.4436) -- (12.5392, 51.4436) -- (12.5392, 51.2385) -- (12.2433, 51.2385);
        
        \end{tikzpicture}
    }
}~\subfigure[Optimized sparse space tiling]{\label{fig:leipzig4}
    \resizebox{0.45\linewidth}{!}{
        \begin{tikzpicture}[scale=3]
        % [spy using outlines={red, circle, magnification=4, size=.3cm, connect spies}]

        \draw[fill=Gray!20!white, line width=0.01pt] (12.24334845000007, 51.348402) -- (12.260675450000065, 51.42005750000007) -- (12.45318745000003, 51.44363850000005) -- (12.498737950000077, 51.36050450000005) -- (12.5391649500001, 51.3063565) -- (12.253649450000097, 51.23853550000001) -- (12.24334845000007, 51.348402);

        \draw[fill=Green, line width=0.01pt]
        (12.350457680302574, 51.335741860486834) -- (12.352140928324896, 51.333637635911934) -- (12.352439235513609, 51.333260718584498) -- (12.35313583488478, 51.332573109900757) -- (12.353698084666826, 51.332288104893777) -- (12.354745816406369, 51.332356893302503) -- (12.357841093431484, 51.333607855296478) -- (12.362974379474052, 51.335692068272955) -- (12.363614519530703, 51.335366147209989) -- (12.365929215946238, 51.335501366978214) -- (12.366311547463905, 51.334507614038031) -- (12.364308429206481, 51.333575625095392) -- (12.359845102391926, 51.331504636298114) -- (12.363055148685051, 51.33018700453627) -- (12.364807826900106, 51.325985154216667) -- (12.361284026250093, 51.327881992318808) -- (12.359637477398138, 51.327576574910601) -- (12.358879859135268, 51.326425098612404) -- (12.357259861513205, 51.325364351195581) -- (12.357642647938114, 51.324289466492395) -- (12.357946704334694, 51.322987080911844) -- (12.358780590768848, 51.318658908839012) -- (12.358634489819959, 51.316495595865717) -- (12.360475078143981, 51.316603469281716) -- (12.3617195173801, 51.319237258521404) -- (12.36541261590396, 51.318790499281072) -- (12.365630387854567, 51.313999496266284) -- (12.362997395756869, 51.313763242389527) -- (12.363995000516482, 51.311899434919425) -- (12.365051262574289, 51.311874076620185) -- (12.365607273029472, 51.310066772254423) -- (12.367888419286293, 51.305096585224398) -- (12.369674573591915, 51.304967455872415) -- (12.372537273655375, 51.302571373158798) -- (12.373086928998044, 51.301165079604004) -- (12.379545575204549, 51.299666971379082) -- (12.37872328759245, 51.298212804499961) -- (12.378921390603191, 51.29727894345843) -- (12.379763225230846, 51.295991952475191) -- (12.381119015331068, 51.294987706263804) -- (12.383924177183824, 51.294093603416748) -- (12.386512362349434, 51.293660424501084) -- (12.389192648396918, 51.290668357656699) -- (12.387995633329179, 51.289668118388406) -- (12.387473747627904, 51.290413888067413) -- (12.385350791642892, 51.290262653493834) -- (12.385057800671893, 51.289804036134811) -- (12.381065358211419, 51.288987702341863) -- (12.377334617220791, 51.287971108068191) -- (12.375926843008681, 51.291272742814236) -- (12.374910342900066, 51.292508760459086) -- (12.37300172153798, 51.292955671587016) -- (12.371290236206972, 51.293153338091798) -- (12.370511859464582, 51.291110323034957) -- (12.369462259533334, 51.290022259850353) -- (12.36454034937703, 51.289959109934649) -- (12.362420974717045, 51.291903825165875) -- (12.360083167437903, 51.292796369108409) -- (12.358273361117101, 51.292718268534998) -- (12.357712950055422, 51.291838759381911) -- (12.355791296729445, 51.293606601141796) -- (12.354171523467116, 51.294559231622756) -- (12.352414308467877, 51.295301425808766) -- (12.350630974187782, 51.295432018276152) -- (12.348374522962821, 51.295403685406008) -- (12.342410999386157, 51.294801344483417) -- (12.343446198950682, 51.297354061499775) -- (12.342943438819521, 51.30071634443685) -- (12.344034825445529, 51.304040806650157) -- (12.343555532700668, 51.306082737522757) -- (12.343278938466336, 51.306738587359575) -- (12.345592045420219, 51.308126661912382) -- (12.346437286017204, 51.308793267939492) -- (12.347353938594205, 51.309508555988543) -- (12.348880224960007, 51.311222686411064) -- (12.350055199360009, 51.312683311402537) -- (12.351202153992402, 51.314568733611019) -- (12.351935129380868, 51.316675396602967) -- (12.352188800182851, 51.318364204327793) -- (12.351940213359418, 51.318853258844456) -- (12.351641954959257, 51.31944097097989) -- (12.350388470599748, 51.320347897807686) -- (12.348172259629171, 51.321417335084192) -- (12.345990332673312, 51.322030266067316) -- (12.346490708957655, 51.324833275483222) -- (12.347353460398706, 51.327208587621207) -- (12.34836525703229, 51.329421467505341) -- (12.348559679448288, 51.332634488908042) -- (12.344419357022041, 51.333669329491201) -- (12.342416238561107, 51.334167976991068) -- (12.340703846780286, 51.334227278551197) -- (12.341664189855061, 51.335511184311514) -- (12.342043977133882, 51.336022046304308) -- (12.
342270644067744, 51.337729514149935) -- (12.342342068056602, 51.337755856622543) -- (12.346176128477437, 51.339026334031388) -- (12.346268259457363, 51.338851763161273) -- (12.347353684019337, 51.33710045182044) -- (12.348738598565443, 51.335337963082502) -- (12.350457680302574, 51.335741860486834);

        \draw[fill=NavyBlue, line width=0.01pt] (12.246522753244609, 51.384075217037221) -- (12.247393832331706, 51.384372062492751) -- (12.249163829478919, 51.384785924286867) -- (12.249415284430482, 51.384768019121104) -- (12.250445526525292, 51.385298359110479) -- (12.25147057030267, 51.384750516173575) -- (12.251535098380623, 51.384719085433908) -- (12.249654309392385, 51.382077249513983) -- (12.250816656546887, 51.381651442548353) -- (12.252197359372134, 51.381849688619759) -- (12.25271828098597, 51.381020470191942) -- (12.253766194393881, 51.380730110400258) -- (12.253888913368309, 51.377456252920126) -- (12.259271130637494, 51.375722038656932) -- (12.260885093595359, 51.375061607050284) -- (12.262352700674414, 51.375576271034383) -- (12.264159634897592, 51.376532200546407) -- (12.265236264932071, 51.376311137787653) -- (12.266335283287249, 51.378085757914015) -- (12.268697296117351, 51.379555188036996) -- (12.27108572302801, 51.378798948027466) -- (12.270887879624965, 51.377845254013053) -- (12.272487634123451, 51.377432333310985) -- (12.273498662599771, 51.375288410183721) -- (12.27380881647413, 51.373826273197267) -- (12.27313748501723, 51.371055063531912) -- (12.273277240542997, 51.370043659351246) -- (12.274438352045674, 51.36950772665967) -- (12.276010701822814, 51.36952599330121) -- (12.276695558642889, 51.370266910024725) -- (12.27920018445908, 51.370102219271558) -- (12.280096156493189, 51.369786586921151) -- (12.280868623471511, 51.369106962205244) -- (12.282152852820929, 51.369003327444418) -- (12.282909695584273, 51.369832290103048) -- (12.282583808915742, 51.370885012220768) -- (12.285480368612724, 51.370253988591884) -- (12.286542245745107, 51.371971310788872) -- (12.284743954871489, 51.372654544283556) -- (12.285617613345947, 51.373942359654343) -- (12.283289090612593, 51.374879011447234) -- (12.281813680648778, 51.373893277483816) -- (12.279653525205013, 51.375761525447771) -- (12.280236922803141, 51.376232879931557) -- (12.280870718316358, 51.376739875609651) -- (12.282016629140035, 51.377834282255868) -- (12.282428735362069, 51.378533158196767) -- (12.283810966985126, 51.377810793806972) -- (12.284551938088933, 51.378210890321448) -- (12.285447715690598, 51.377893797951408) -- (12.28726608794922, 51.37685398694606) -- (12.288633682489589, 51.376161590354052) -- (12.289334172401844, 51.377193200583164) -- (12.290954638092328, 51.376940291064393) -- (12.29130423239611, 51.378115618172401) -- (12.296281390888996, 51.377762572473443) -- (12.29969187538671, 51.376950178516097) -- (12.299987086197998, 51.375641304176433) -- (12.302524948741569, 51.375351667977668) -- (12.306842510490204, 51.37525786524693) -- (12.310553507338504, 51.374333876474012) -- (12.312428592008983, 51.373491247994451) -- (12.314735468685871, 51.374610329100456) -- (12.319036580904704, 51.371726907262691) -- (12.320019886486373, 51.370566273950537) -- (12.321066342504787, 51.370727793677368) -- (12.322451702406559, 51.371437542380377) -- (12.324263967567594, 51.371636206215392) -- (12.325831351407274, 51.37160326616663) -- (12.32882323882083, 51.370348891019866) -- (12.329632057738772, 51.368982837480445) -- (12.333575239733351, 51.368290860162546) -- (12.335622230574739, 51.367495138798176) -- (12.336847136310912, 51.366712047608736) -- (12.336000701166544, 51.366001475159202) -- (12.338461789447825, 51.360832288791457) -- (12.341129193123255, 51.36186309942056) -- (12.34195650811335, 51.360224780975727) -- (12.344341513596744, 51.359113696037994) -- (12.345136446831756, 51.359787829310676) -- (12.346735493632552, 51.358573530505737) -- (12.347600792737344, 51.357568696531743) -- (12.35015624313213, 51.35671011433756) -- (12.352275044928382, 51.35711708078712) -- (12.355485947000082, 51.354999292113362) -- (12.35832209243811, 51.35662900342497) -- (12.359445439245739, 51.35618882106079) -- (12.362241222487967, 51.355492279136996) -- (12.364495716724051, 51.354852519015076) -- (12.365319549672318, 51.354396114121855) -- (12.367677737660886, 51.353165934808551) -- (12.369392736796422, 51.352926756915281) -- (12.36984169500715, 51.
351503681367561) -- (12.368564823621767, 51.351003258007282) -- (12.369223947483182, 51.350311194184769) -- (12.370352998643272, 51.349606977605731) -- (12.369851845600937, 51.347190620284501) -- (12.368009434018132, 51.347609557387749) -- (12.365737688287201, 51.34748229015446) -- (12.361164044798107, 51.346848302016092) -- (12.355994746647676, 51.349020015171874) -- (12.357658987482621, 51.350182885602777) -- (12.354004340495889, 51.350090335631627) -- (12.353282344220446, 51.351255093780189) -- (12.348831933359381, 51.350354508052533) -- (12.346311976552157, 51.353051007964702) -- (12.344517853737383, 51.35199511390072) -- (12.343404339602083, 51.351765068543145) -- (12.341345629308595, 51.352508274911536) -- (12.341076677906312, 51.353058477379868) -- (12.340291658571687, 51.354798218401335) -- (12.33977537650968, 51.355934269469216) -- (12.338198657694754, 51.356107697058178) -- (12.336758592041136, 51.358323816021944) -- (12.335715219317326, 51.358146109064059) -- (12.33651441206519, 51.356277026331561) -- (12.33353315389297, 51.357313588586649) -- (12.333122373602066, 51.356436679959451) -- (12.33723804256798, 51.355135645118878) -- (12.338312617804823, 51.354295765937707) -- (12.342005117713381, 51.347700461785969) -- (12.339771743403764, 51.347659560881731) -- (12.339021377452122, 51.34736935182292) -- (12.338145351555761, 51.343882052985677) -- (12.334104078488531, 51.343939429636244) -- (12.334208436228492, 51.345165267445587) -- (12.327811416678017, 51.345031704900158) -- (12.326931383170262, 51.346895114437444) -- (12.327814971934759, 51.34795678434886) -- (12.326212706909454, 51.347923861387216) -- (12.323977965678798, 51.351117154271265) -- (12.32599683224743, 51.351411765920325) -- (12.326634728325402, 51.351743415338447) -- (12.326603265275594, 51.352876257338892) -- (12.325826521488141, 51.353651145174013) -- (12.326372348581211, 51.356043841924567) -- (12.323706028339783, 51.356692835029129) -- (12.318572767864996, 51.356935508239829) -- (12.318365016232223, 51.356940722877987) -- (12.317838545365285, 51.357879263415711) -- (12.314636157536155, 51.358260729602989) -- (12.313942772650453, 51.356979426343052) -- (12.311870153345206, 51.356847496509239) -- (12.311741000486265, 51.356825949081063) -- (12.311414890693008, 51.357763935971192) -- (12.309736525560789, 51.357667320822657) -- (12.308901940797451, 51.358782565020512) -- (12.30860839859281, 51.360683275497514) -- (12.302067429369602, 51.36072216101087) -- (12.300204670240651, 51.361253211641476) -- (12.29897523894412, 51.362295542285459) -- (12.294893281479373, 51.362134065384943) -- (12.292976647459856, 51.361755851946569) -- (12.29247888874972, 51.364507023295516) -- (12.291283218060554, 51.365002892837012) -- (12.289297321331267, 51.365190711121755) -- (12.287333893244178, 51.365643213027894) -- (12.286573135568434, 51.367297759182676) -- (12.283978188948126, 51.36787279504874) -- (12.283318190508282, 51.366117002088174) -- (12.281746586784671, 51.36608473512895) -- (12.281477511731753, 51.365176116131728) -- (12.28039098594331, 51.364428798439732) -- (12.279227603716308, 51.364217859264741) -- (12.277642273343293, 51.364427907161243) -- (12.274905197989451, 51.363822518882955) -- (12.275001952726143, 51.364743700392999) -- (12.275091059443319, 51.365536595874865) -- (12.271440587291558, 51.366593565979514) -- (12.270422662707242, 51.367062062308619) -- (12.270489858935029, 51.367762191973824) -- (12.270922902640701, 51.368356553461652) -- (12.272059656836335, 51.368951737505981) -- (12.26937012395509, 51.369848641217544) -- (12.267475676358472, 51.370271444919716) -- (12.269696899969295, 51.366928260472775) -- (12.267004223809117, 51.369020067211252) -- (12.265774510392861, 51.370565570680945) -- (12.263152048553426, 51.370404783146533) -- (12.26302620893426, 51.369384238102391) -- (12.263816131530968, 51.367311711123655) -- (12.260364409727808, 51.366292946190796) -- (12.261799952116782, 51.363857880878328) -- (12.259521110551093, 51.363386894067759) -- (12.257257735681563, 51.363300098103672) -- (12.248183285952161, 51.36373488862143) -- (
12.247690183630381, 51.364600939469582) -- (12.246814861646913, 51.36507150134041) -- (12.245807838577376, 51.364836957468761) -- (12.2438414232022, 51.365773327486622) -- (12.242426683808604, 51.366180294147348) -- (12.233934642339268, 51.367788272869532) -- (12.234484927288017, 51.368934273987563) -- (12.233465776278955, 51.369447884088252) -- (12.230922483090223, 51.367880553669963) -- (12.22953548944434, 51.370177079255811) -- (12.230099422680977, 51.37105808334271) -- (12.229842750557417, 51.371660584527703) -- (12.231052857454873, 51.372262195645582) -- (12.230884004297474, 51.373259069857738) -- (12.229302474513062, 51.373332142595928) -- (12.230802855177876, 51.378160297483859) -- (12.229696317568656, 51.378334826817017) -- (12.228186214683037, 51.376558596519125) -- (12.227115492357047, 51.376714832637248) -- (12.225691352683789, 51.376605823482862) -- (12.225578943352197, 51.375464919097098) -- (12.227448529904082, 51.375489440295496) -- (12.227565725248423, 51.374193644304285) -- (12.226155072999806, 51.373837495877531) -- (12.225410889648259, 51.372882074620115) -- (12.225348309546142, 51.371494457069495) -- (12.223764747762653, 51.371017158467502) -- (12.221369018889131, 51.370632276045576) -- (12.218067307565756, 51.369682497044671) -- (12.217878274688685, 51.369628115558939) -- (12.216799331652942, 51.371601486104396) -- (12.214367642473924, 51.371217582450072) -- (12.213139333919457, 51.371575158867969) -- (12.214399114541607, 51.371993547359359) -- (12.214432130483928, 51.373183108670702) -- (12.214825602475132, 51.374585306376218) -- (12.21352525238807, 51.374895045775048) -- (12.212472506513214, 51.374112315159834) -- (12.210920030681038, 51.374300700937731) -- (12.208855089210765, 51.373954015339706) -- (12.207395123716301, 51.373868363441403) -- (12.206677229822247, 51.37419202589119) -- (12.205979325200865, 51.376071468299806) -- (12.206664951934053, 51.37864800143619) -- (12.209164683146307, 51.377915459998064) -- (12.21003067526938, 51.378693549586004) -- (12.208660226043499, 51.379480155630617) -- (12.207316992164763, 51.379880122418115) -- (12.205355005800122, 51.380217142385057) -- (12.203485523737612, 51.380306993062355) -- (12.196573449841249, 51.380038137495852) -- (12.189680711363202, 51.378930685215899) -- (12.188973074299239, 51.380239222226272) -- (12.190193915401418, 51.381888291124341) -- (12.189592515705305, 51.382675243390416) -- (12.189272732398623, 51.383785672531978) -- (12.190384481971387, 51.382922087469261) -- (12.192672664883768, 51.383231299206834) -- (12.193606339068129, 51.38480885197319) -- (12.190887846164504, 51.384904201159678) -- (12.19158950649085, 51.385969259563645) -- (12.192863426489581, 51.386162137524138) -- (12.192311861490763, 51.386764643605304) -- (12.192998436574038, 51.387582547443245) -- (12.194429852690803, 51.387550753040934) -- (12.19494056629668, 51.387541663233883) -- (12.195577605755629, 51.388517024151348) -- (12.196484968817158, 51.388737181490043) -- (12.198515207773884, 51.389092182978622) -- (12.198920615257807, 51.389676584445702) -- (12.200068012383136, 51.388835931904715) -- (12.200945775506133, 51.388858438214989) -- (12.201344427512213, 51.389532811064811) -- (12.202510721082682, 51.389721734104974) -- (12.203342958427397, 51.3892183046855) -- (12.202590734446687, 51.388305569265626) -- (12.204262696533192, 51.387366094075006) -- (12.205623761751248, 51.386075892373796) -- (12.206976600504332, 51.386202012809129) -- (12.208198227474099, 51.385936980417334) -- (12.21058531553262, 51.384103026983574) -- (12.211701154056778, 51.384337165409981) -- (12.211780736382146, 51.386050344368329) -- (12.215134025612725, 51.386388744007398) -- (12.216442602397949, 51.384734555916474) -- (12.218911314728679, 51.385140302577859) -- (12.220254642979143, 51.384808187210901) -- (12.220820148197198, 51.382650182034233) -- (12.223481853697647, 51.382858396841108) -- (12.223767471441324, 51.382408927078131) -- (12.225067750141895, 51.382121896291295) -- (12.225287937646275, 51.381510548831692) -- (12.226181343522086, 51.381902228068697) -- (12.227799005632077,
 51.381849234565024) -- (12.229384499486233, 51.381045459890373) -- (12.229813750119721, 51.380555707046931) -- (12.229874475161107, 51.379643142340882) -- (12.230117195541764, 51.379238710595864) -- (12.231849210872458, 51.379055597880139) -- (12.232584834407923, 51.379577555532812) -- (12.23302526762907, 51.378332801639495) -- (12.235542600598729, 51.378559573089518) -- (12.237263495397762, 51.379176422542287) -- (12.238939239387522, 51.379335028002657) -- (12.24180027646884, 51.379315207596647) -- (12.241696926658449, 51.380343292881619) -- (12.243075743052062, 51.380034492603926) -- (12.243473315602097, 51.378881899864723) -- (12.246480180663648, 51.379418772581886) -- (12.246393296626321, 51.380921424997098) -- (12.248846816864127, 51.38151110727091) -- (12.247146944640511, 51.382287280461576) -- (12.246023435322801, 51.38335612324277) -- (12.246097988045527, 51.383926783297113) -- (12.246522753244609, 51.384075217037221);
        
        \draw[step=0.5mm,Gray!20!black,line width=0.01pt] (12.18897307429924, 51.2385) grid (12.5392, 51.4436);

        \draw[Goldenrod, fill=Goldenrod, opacity=0.5,line width=0.01pt] (12.3, 51.30) -- (12.2, 51.30) -- (12.2, 51.4) -- (12.4, 51.40) -- (12.4, 51.30) -- (12.4, 51.25) -- (12.3, 51.25) -- (12.3, 51.30);
        
        \draw[Goldenrod, fill=Goldenrod, pattern=south west lines,pattern color=Goldenrod!20!black,opacity=0.5,line width=0.01pt] (12.3, 51.30) -- (12.2, 51.30) -- (12.2, 51.4) -- (12.4, 51.40) -- (12.4, 51.30) -- (12.4, 51.25) -- (12.3, 51.25) -- (12.3, 51.30);
        
        \draw[red,ultra thin] (12.18897307429924, 51.34388205298568) -- (12.18897307429924, 51.389721734104974) -- (12.370352998643272, 51.389721734104974) -- (12.370352998643272, 51.34388205298568) -- (12.18897307429924, 51.34388205298568);
        
        \draw[red,ultra thin] (12.340703846780286, 51.28797110806819) -- (12.340703846780286, 51.33902633403139) -- (12.389192648396918, 51.33902633403139) -- (12.389192648396918, 51.28797110806819) -- (12.340703846780286, 51.28797110806819);
        
        \draw[red,ultra thin] (12.2433, 51.2385) --(12.2433, 51.4436) -- (12.5392, 51.4436) -- (12.5392, 51.2385) -- (12.2433, 51.2385);
        
        \end{tikzpicture}
    }
}

\caption{City of Leipzig from NUTS (in gray) together with topologically related geometries from CLC (in green and blue). See \autoref{sec:evaluation} for description of NUTS and CLC.}
\label{fig:leipzig}
\end{figure}

\subsubsection{DE-9IM}

% Based on the original works of Egenhofer et al.~\cite{egenhofer1991point}, Clementini et al.~\cite{clementini1994modelling} propose the The Dimensionally Extended nine-Intersection Model (DE-9IM).
The Dimensionally Extended nine-Intersection Model (DE-9IM)~\cite{clementini1994modelling} is a standard used to describe the topological relations between two geometries in two-dimensional space.
%The DE-9IM is widely used point-set topology, geo-spatial topology, as well as computer algorithms for spatial analysis. 
The spatial relations expressed by the model are topological and are invariant to rotation, translation and scaling transformations~\cite{egenhofer1991point}.
The basic idea behind the DE-9IM model is to construct the $3 \times 3$ intersection matrix:
\begin{equation}\setlength\arraycolsep{8pt}
\label{mat:de9im}
    DE9IM(a,b)
    \begin{bmatrix}
        dim(I(g_1) \cap I(g_2)) & dim(I(g_1) \cap B(g_2)) & dim(I(g_1) \cap E(g_2)) \\
        dim(B(g_1) \cap I(g_2)) & dim(B(g_1) \cap B(g_2)) & dim(B(g_1) \cap E(g_2)) \\
        dim(E(g_1) \cap I(g_2)) & dim(E(g_1) \cap B(g_2)) & dim(E(g_1) \cap E(g_2)) \\
    \end{bmatrix}
\end{equation}
where $dim$ is the maximum number of dimensions of the intersection $\cap$ of the interior ($I$), boundary ($B$), or exterior ($E$) of the two geometries $g_1$ and $g_2$.
The domain of $dim$ is $\{-1, 0, 1 , 2\}$, where $-1$ indicates no intersection, $0$ stands for an intersection which results into a set of one or more points, $1$ indicates an  intersection made up of lines and $2$ standard for an intersection which results in an area.
A simplified binary version of $dim(x)$ with the binary domain $\{$\emph{true}, \emph{false}$\}$ is obtained using the boolean function $\beta(dim(I(g)) =$ \emph{false} iff $dim(I(g)) = -1$ and \emph{true} otherwise.

The major insight behind \radon is that \emph{one condition must hold for any of the entries of the DE-9IM matrix to be \emph{true}: There must be at least one point in space that is common to the shapes of the polygons}. Here, sharing common points includes the intersection of the lines connecting the points which make up the polygon. Note that the only spatial relation for which all entries are 0 is the \texttt{disjoint} relation, which \radon can easily compute by computing the inverse of the \texttt{intersects} relation. 
Hence, by accelerating the computation of whether two geometries share at least one point, we can accelerate the computation of any of the DE-9IM entries. Therewith, we can also accelerate the computation of any topological relation, as they can all be derived from the DE-9IM entries. We implement this insight by using an improved indexing approach based on minimum bounding boxes and space tiling.
% \begin{equation}
%     \beta(dim(I(g)) = \left\{
%         \begin{array}{ll}
%             true & \text{if } dim(I(g)) \geq 0.\\
%             false & \text{if } dim(I(g)) = -1.\\
%         \end{array} \right. 
% \end{equation}
%The result of DE-9IM, either in its numerical or boolean domain, is serialized as \emph{DE-9IM string codes} as a $9$ character string pattern\footnote{The DE-9IM string codes is a standard format of the \emph{Open GIS Consortium}. For more details see \url{http://portal.opengeospatial.org/files/?artifact_id=829}}.
% For output checking or pattern analysis, a matrix value (or a string code) can be checked by a "mask": a desired output value with optional asterisk symbols as wildcards — that is, "*" indicating output positions that the designer does not care about (free values or "don't-care positions"). Then, the mask's domain is $\{0,1,2,T,F,*\}$, or $\{T,F,*\}$ for the boolean form.

%\subsubsection{Minimum Bounding Box}
%\label{sec:mmb}

%Due to the complexity of geospatial data, abstractions are needed to accelerate computations involving large geometries. 
%One of these abstractions is the Minimum Bounding Box (MMB) of a geometry~\cite{o1985finding}, also known as its \emph{envelope}. 
%We first give a conceptual definition of MBB, then describe its construction method.
% * Approximating topological relations using bounding boxes (including proofs?)
%\begin{definition}
The \emph{minimum bounding box (MBB)} of a geometry $g$ in $n$ dimensions ~\cite{o1985finding} (also called its \emph{envelope}) is the rectangular box with the smallest measure (area, volume, or hypervolume in higher dimensions) within which all points  of $g$ lie. 
%\end{definition}
Let $\kappa_i(p)$ denote the $i^\text{th}$ dimension coordinate of a point $p$. To obtain the MBB of a geometry $g$, we have to find the lowest point coordinate $c_i^{\bot}=\min_{p\in g}\{\kappa_i(p)\}$ and the highest point coordinate $c_i^{\top}=\max_{p\in g}\{\kappa_i(p)\}$ in each dimension $i \in \{0,\dots,n\}$. Then, the $2^n$ vertices of the MBB in $n$ dimensions are all the vectors $\left(c_0^{(\cdot)}, c_1^{(\cdot)}, \dots, c_n^{(\cdot)}\right)$, where $(\cdot) \in \{\bot,\top\}$.
\autoref{fig:leipzig2} shows an example of using the MBB to abstract the running example in \autoref{fig:leipzig1}.

On the other hand,  \emph{space tiling} is an indexing technique for spatial data inspired by tessellation and previously used by LD optimization approaches such as \textsc{Orchid}~\cite{orchid} and $\mathcal{HR}^3$~\cite{NGON12}.
The main idea behind space tiling is to divide $n$-dimensional affine spaces into arbitrarily many hypercubes with the same edge length $\ell$.
These hypercubes are indexed with vectors $i \in \mathbb{N}^n$ to serve as addressable buckets for geometries.
In turn, the obtained index structures can be exploited by various optimization techniques.
We call $\Delta = \ell^{-1}$ the \emph{granularity factor}.
This notion of space tiling can be  generalized to hyperrectangles, in which case there exist $n$ independent granularity factors $\Delta_i$ where $ i \in \{0 \dots n\}$.
Note that although we eventually use hyperrectangles, we will stick to the term hypercube for the sake of simplicity and just define independent granularity factors when necessary.
\autoref{fig:leipzig3} shows our running example along with a grid of hypercubes using $\Delta=2$, where the green area will be indexed to each highlighted hypercube.
%The gray and blue areas are indexed in the same manner.

% \begin{definition}
% A geometry $g_1$ \textbf{topologically contains} a geometry $g_2$ iff no points of $g_2$ lie in the exterior of $g_1$, and at least one point of the interior of $g_2$ lies in the interior of $g_1$. Formally, 
% $(I(g_1) \cap I(g_2)) \wedge
% \lnot(E(g_1) \cup I(g_2)) \wedge
% \lnot(E(g_1) \cap B(g_2))$.
% \end{definition}

% \begin{figure}
%     \centering
%     \begin{tikzpicture}
%         \draw plot[smooth, tension=.7] coordinates {(-3.5,0.5) (-3,2.5) (-1,3.5) (1.5,3) (4,3.5) (5,2.5) (5,0.5) (2.5,-2) (0,-0.5) (-3,-2) (-3.5,0.5)};
%         \draw  plot[smooth, tension=.8, xshift=5cm] coordinates {(-2.5,-0.5) (-3.5,0) (-2.5,0.5) (-3,1) (-2,1.5) (-2,3) (-1,2.5) (1,4.5) (2.5,3) (3,3.5) (3.5,3) (3,2) (4.5,2) (4.5,0) (3,1) (2.5,-0.5) (3.5,-1.5) (1.5,-1) (0.5,-2) (-2,-2.5) (-1.5,-1) (-2.5,-1.5) (-2.5,-0.5)}; 
%     %     \begin{scope}[even odd rule]
%     %     % Define a clipping path. All paths outside ringa will
%     %     % be cut because the even odd rule is set. 
%     %     \clip \ringa;
%     %     % Fill ringb. Since the even odd rule is set, only the
%     %     % ring will be filled, not the hole in the middle.  
%     %     \fill[fill=orange] \ringb;
%     % \end{scope}
%     \end{tikzpicture}
% \caption{Caption}
% \label{fig:my_label}
% \end{figure}

% \input{de9im}

 \vspace{-0.7cm}
\section{Approach}
\label{sec:radon}
\vspace{-.4cm}

% \subsection{Overview}
We have now introduced all ingredients necessary for defining the \radon algorithm (\autoref{alg:radon}).
\radon takes a set of source resources $S$, a set of target resources $T$ and a topological relation $r$ as input.
The goal of \radon  is to generate the mapping $M = \{(s,t) \in S \times T: r(s,t)\}$  \emph{efficiently}, where $r$ is a topological relation.
\radon addresses this challenge by means of three optimization steps:
\emph{Swapping} for index size minimization, \emph{space tiling} for indexing and \emph{filtering} to improve the runtime of the computation of topological relations.
%First, it employs a \emph{novel swapping strategy} based on hypervolume approximation to minimize the size of its index.
%Thereafter, \radon employs an \emph{optimized sparse space tiling} algorithm to build an index for the input geometries within $S$ and $T$.
%Finally, \radon applies a filter using the \emph{approximation of topological relations} based on MBBs to minimize the computational intensive computation of topological relations.
In the following, we present each of these steps in detail.

\subsection{Swapping Strategy}
\label{sub:eth}

We introduce the \textit{Estimated Total Hypervolume} (\eh) of a set of geometries $X$ as
\begin{equation}
%   \vspace{-5pt}
    \eh(X) = |X|\prod_{i=1}^d
    \frac{1}{|X|}\sum\limits_{x\in X} \left(\max\limits_{p \in x}\{\kappa_i(p)\} -\min\limits_{p \in x}\{\kappa_i(p)\}\right),
    \label{eq:eth}
\end{equation}
% \begin{align}
%     \eh(X) &= \frac{1}{|X|^{d-1}}\prod_{i=1}^d
%     \sum\limits_{x\in X} \left(\max\limits_{p \in x}\{\kappa_i(p)\} -\min\limits_{p \in x}\{\kappa_i(p)\}\right)\\
%     &= |X|\cdot\prod_{i=1}^d
%     \frac{1}{|X|}\sum\limits_{x\in X} \left(\max\limits_{p \in x}\{\kappa_i(p)\} -\min\limits_{p \in x}\{\kappa_i(p)\}\right)\\
%     &= |X|\cdot\prod_{i=1}^d \text{avg\_edge\_length}(i)\\
%     &= |X|\cdot \overline{V_\text{MBB}}
%     ,
%     \label{eq:eth}
% \end{align}
with $d$ being the number of dimensions of the resource geometries and $\kappa_i(p)$ denoting the coordinate of a point $p$ in the $i^\text{th}$ dimension.
If $\eh(T) < \eh(S)$, \radon swaps $S$ and $T$ and computes the reverse\footnote{Formally, the reverse relation $r'$ of a relation $r$ is defined as $r'(y, x) \Leftrightarrow r(x, y)$.}  relation $r'$ instead of $r$    (Lines~\ref{lin:swap1}--\ref{lin:swap2}). 
% , e.g., $\texttt{covers}(x,y) \Leftrightarrow \texttt{coveredBy}(y,x)$
For example, if $r$ were the topological relation \texttt{covered} and $\eh(S) < \eh(T)$, then \radon swaps $T$ and $S$ and compute the reverse relation of $r$, i.e., \texttt{coveredBy}. 
%Based on the assumption that geometries within one dataset rarely overlap, $\eh(X)$ is a good heuristic for the amount of hypercubes that to be generated in the subsequent step (see \autoref{sub:optspacetiling}) in the case we choose $S$ as $X$.
The rationale behind using \eh instead of the size of the datasets is that even small datasets can contain very large geometries that span over a large number of hypercubes and would lead to large spatial index when used as source. 
%therefore would increase the runtime and memory consumption of both the subsequent space tiling and link generation phases when used as source.
%A similar idea has already shown good results when optimize string similarities \cite{dressler2014efficient}. 
For the sake of illustration, consider the running example in \autoref{fig:leipzig1}. Here, we can see that the \eh of NUTS (containing only the gray geometry) is greater than the \eh of CLC (containing the green and blue areas). Thus, we set $S=\text{CLC}$ and $T=\text{NUTS}$.

\subsection{Optimized Sparse Space Tiling}
\label{sub:optspacetiling}

In its second step, \radon utilizes space tiling to insert all geometries $s \in S$ and $t \in T$ into an index $I$, which maps resources to sets of hypercubes.
Let $\Delta_\varphi$ and $\Delta_\lambda$ be the granularities across the latitude and longitude (several strategies can be used to compute these values. We present and evaluate them in \autoref{sec:eval_granularity}). 
For indexing a resource $x$, we begin by computing its MMB's upper left and lower right corners coordinates $(\varphi_1(x), \lambda_1(x))$ and $(\varphi_2(x), \lambda_2(x))$ respectively (Line~\ref{lin:mbbs}).
Then, we map each $x$ to all hypercubes over which its MBB spans (Lines~\ref{lin:hypers1}--\ref{lin:hypers2}).
To this end, we transform the MBB's corner coordinates into hypercube indices using $\psi_\bot$ and $\psi_\top$ from \autoref{eq:psi}.
% To transform these geospatial coordinates into hypercube indices using \autoref{eq:transformation1} and \autoref{eq:transformation2}.
\begin{equation}
    \psi^\bot\left(x\right) = \floor{x\cdot\Delta_\varphi}\quad\quad\quad \psi^\top\left(x\right) = \ceil{x\cdot\Delta_\varphi}
    \label{eq:psi}
\end{equation}
We then map $x$ to all hypercubes with indices $(i,j)$ where $i,j \in \mathbb{Z}$,  $\psi^\bot(\varphi_1(x)) \leq i \leq \psi^\top(\varphi_2(x))$ and $\psi^\bot(\lambda_1(x)) \leq j \leq \psi^\top(\lambda_2(x))$.
Note that the special case of geometries passing over the \emph{antimeridian} is detected and dealt with by splitting such geometries into 2 geometries before and after the antimeridian. 
The index $I$ now contains the portions of the space (i.e., the hypercubes) within which portions of $x$ can potentially be found.
It is important to notice that entities in portions of space that do not belong to the hypercubes which contain elements of $S$ (denoted $I(S)$) will always be disjoint with the elements of $T$.
We leverage this insight as follows: 
We first index all $s \in S$. Then we follow the same procedure for $t \in T$ (Lines~\ref{lin:indext1}--\ref{lin:indext2}) but only index geometries $t$ that are potentially in hypercubes already contained in $I(S)$.
This \emph{optimized sparse space tiling} is the motivation for the previously introduced swapping strategy.
Indexing the dataset with the least \eh first results in an index $I$ with less hypercubes.

Consider again our running example in \autoref{fig:leipzig3} for the sake of illustration. Assume the granularity factors are $\Delta_{\varphi} = \Delta_{\lambda} =2$. The green area's MBB has the following corner coordinates:
    $(\varphi_1(g),\; \lambda_1(g))$ =   (12.340703846780286,\; 51.28797110806819) and 
    $(\varphi_2(g),\; \lambda_2(g))$ = 
  (12.389192648396918,\; 51.33902633403139).
Therefore, $\psi^\bot(\varphi_1(g))=24$, $\psi^\bot(\lambda_1(g))=102$, $\psi^\top(\varphi_2(g))=25$, $\psi^\top(\lambda_2(g))=103$ and thus this geometry will be indexed into the four highlighted hypercubes with index vectors $(24,102), (24,103), $ $(25,102)$ and $(25,103)$.
In \autoref{fig:leipzig4}, we highlighted all hypercubes containing the gray geometry after the optimized sparse space tiling.
Notice that many hypercubes are empty as a result of not containing any portion of the other dataset's geometries.

\subsection{Link Generation}
\label{sub:linkgeneration}
% After the computation of the index $I$, \radon implements the last speedup strategy using a MBB-based topological relation approximation.
After the computation of the index $I$, \radon implements the last speedup strategy using a MBB-based filtering technique.
% For each hypercube with indexed geometries from both $S$ and $T$ (Line~\ref{lin:hypercube}), \radon first examines the topological relation in the MBB level using the \textsc{TestMBB} procedure.
For each hypercube with indexed geometries from both $S$ and $T$ (Line~\ref{lin:hypercube}), \radon first discards unnecessary computations using the \textsc{TestMBB} procedure.
\textsc{TestMBB} optimizes the subset of DE-9IM relations for relations where one geometry has interior or boundary points in the exterior of the other geometry, i.e. $s\subseteq t$ or $t\subseteq s$ (e.g. \texttt{equals}, \texttt{covers}, \texttt{within} formally defined in \autoref{sec:setup}).
% If the relation demands any of the DE-9IM entries to be true, then $\Box(s) \cap \Box(t)) = \emptyset \Rightarrow \lnot r(s,t)$, where $\Box(g)$ gives the MBB of a geometry $g$.  
Let $\Box(g)$ denote the MBB geometry of a geometry $g$.
Note that $g\subseteq\Box(g)$  always holds.
We can now infer $\lnot r(\Box(s), \Box(t)) \Rightarrow \lnot r(s,t)$ using the transitivity of $\subseteq$.
For all other relations, \textsc{TestMBB} simply returns \textit{true}.
% In the special case of DE-9IM relations where none of the geometries have interior or boundary points in the exterior of the other, this optimization is even more effective.
% \textsc{TestMBB} is optimized for each of the topological relations.
% For instance, when dealing with the topological relation \texttt{equal}, the \textsc{TestMBB} procedures returns \texttt{true} if the two input geometries have the exact same MBB.
% \autoref{alg:tp} details the process of optimizing the topological relations based on the inexpensive computations of the MBB of source and target geometries.
% \todo{Problem from here below. You talk about relations instead of just comparing MBBs. The relations should come at the end right? KD: Question unclear, attempt to answer: Yes the actual relations get computed at the end but even for the comparison of the MBBs it is important, which relation is given.}
% Note that since the geometries we compare at this point are already indexed into the same hypercube of the index $I$, we can just return true for the relations \texttt{intersects}, \texttt{crosses}, \texttt{overlaps} and \texttt{touches}. This is valid because the most specific criterion for these relations at the MBB-level is to have at least one hypercube in common.
For example, in our running example in \autoref{fig:leipzig2}, if $r$ is the \texttt{within} topological relation, we do not need to compute $r$ for the blue geometry, as its MBB is not completely within the gray geometry's MBB.
In case the \textsc{TestMBB} method returns true, \radon carries out the more expensive computation of the topological relation between the geometries $s$ and $t$ (Line~\ref{lin:apicall}).
If $r(s,t)$ holds, \radon adds the pair $(s,t)$ to the result mapping $M$.
To make sure that we compute each pair $(s,t)\in S\times T$ at most once, we use a cache in form of a mapping $C$ which stores the already computed pairs of $(s,t)$ (Lines~\ref{lin:computedmap}-\ref{lin:computedmap2}).

\begin{proposition}
\radon is complete and correct.% w.r.t. the application of space tiling.
\end{proposition}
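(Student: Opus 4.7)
The plan is to decompose the statement into correctness (every pair in $M$ satisfies $r$) and completeness (every pair in $S\times T$ satisfying $r$ is in $M$), and then handle each of \radon's three filtering stages (swap, sparse tiling, \textsc{TestMBB}) separately. Correctness is almost immediate: pairs are only inserted into $M$ at Line~\ref{lin:apicall}, after an actual call to the exact topological predicate, and neither the tiling index nor the \textsc{TestMBB} filter nor the cache $C$ ever \emph{adds} a candidate on its own, they only skip work. Hence nothing spurious can enter $M$.

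For completeness, the swap step is handled by the definition of the reverse relation: when $\eh(T)<\eh(S)$, \radon computes $r'$ on $T\times S$ instead of $r$ on $S\times T$, and $r'(t,s)\Leftrightarrow r(s,t)$, so the two problems have the same solution set up to coordinate order. Thus it suffices to show completeness in the non-swapped case. The \textsc{TestMBB} step is handled by checking, relation by relation, that the predicate it tests is a necessary MBB-condition. For \texttt{equals}, \texttt{covers}, and \texttt{coveredBy}, $r(s,t)$ entails the corresponding containment between $s$ and $t$, and since $g\subseteq\Box(g)$ and $\Box(g)$ is the smallest axis-aligned box enclosing $g$, the containment lifts to MBBs; for every other relation \textsc{TestMBB} returns \emph{true} unconditionally. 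Hence no satisfying pair is discarded at this stage.

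The crux, and the only nontrivial step, is the optimized sparse space tiling. Here I invoke the key observation already highlighted in \autoref{sec:preliminaries}: any DE-9IM relation other than \texttt{disjoint} requires at least one point to be common to the two shapes (and \texttt{disjoint} is obtained as the negation of \texttt{intersects}, so we may restrict attention to non-disjoint $r$). Suppose $r(s,t)$ holds. Pick a point $p$ shared by $s$ and $t$. Since the hypercubes partition the ambient plane, $p$ lies in some hypercube $h$ with index $(i,j)$. Because $p\in s\subseteq\Box(s)$, the bounds $\psi^\bot$ and $\psi^\top$ of $\Box(s)$ in both dimensions necessarily straddle $(i,j)$, so $s$ is mapped to $h$ by Lines~\ref{lin:hypers1}--\ref{lin:hypers2}; by the same argument applied to $t$, and since $h\in I(S)$, the sparse indexing loop (Lines~\ref{lin:indext1}--\ref{lin:indext2}) also maps $t$ to $h$. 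Therefore $(s,t)$ is enumerated when \radon iterates over $h$ at Line~\ref{lin:hypercube}, and the previous paragraph shows it survives \textsc{TestMBB}.

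The main obstacle is ensuring the tiling argument handles boundary cases correctly: points lying exactly on a hypercube edge must still land in at least one spanned hypercube for both geometries, which follows from the use of the floor/ceiling pair $(\psi^\bot,\psi^\top)$ so that MBBs touching a gridline span both adjacent hypercubes. The cache $C$ poses no threat to completeness either, since it only blocks recomputation of pairs already considered, not the insertion of newly-discovered satisfying pairs. Combining correctness and the three completeness arguments yields the proposition.
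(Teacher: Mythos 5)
Your proposal is correct, and it is in fact considerably more thorough than the proof the paper gives. The paper's argument consists of exactly one observation for completeness --- any true DE-9IM entry forces $g_1 \cap g_2 \neq \emptyset$, and since $g_i \subseteq \Box(g_i)$ this forces $\Box(g_1) \cap \Box(g_2) \neq \emptyset$, so ``checking MBBs'' cannot lose a satisfying pair --- plus the remark that correctness is immediate from the exact predicate call at Line~\ref{lin:apicall}. You share both of these points, but where the paper stops at the necessary-condition statement about MBBs, you actually verify that the algorithm's three pruning stages each preserve that guarantee: the swap is neutralized by the definition of the reverse relation and the final inversion of $M$; \textsc{TestMBB} only tests conditions that are entailed by the relation itself (containment of geometries lifts to containment of their MBBs by monotonicity of $\Box$) and is vacuous for all other relations; and, most importantly, the sparse tiling step cannot drop a satisfying pair because a shared point $p$ lies in some hypercube $h$, and the $\floor{\cdot}$/$\ceil{\cdot}$ index bounds guarantee both geometries are inserted into $h$ (with $h \in I(S)$ nonempty, so the sparse indexing of $T$ does not skip it). This tiling-enumeration argument is the genuinely nontrivial part of completeness and the paper silently elides it, as it does the swap and the cache $C$; your version closes those gaps without changing the underlying idea, which is the same point-sharing insight the paper highlights in \autoref{sec:preliminaries}.
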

\begin{proof}
\label{pro:complete}
Assume that we have two geometries $g_1$ and $g_2$. 
%located in two disjoint MBB.
%Also, assume that the topological relation $r(g_1,g_2)$ holds, where $r \in \{$ \texttt{equals},  \texttt{within}, \texttt{covers}, \texttt{crosses}, \texttt{intersects}, \texttt{overlaps}, \texttt{touches}$\}$. 
Assume that any of the entries of the DE-9IM matrix is true. 
Then, $g_1 \cap g_2 \neq \emptyset$.
Now given that $g_1 \subseteq \Box(g_1) \wedge g_2 \subseteq \Box(g_2)$, we can infer that $g_1 \cap g_2 \neq \emptyset \Rightarrow \Box(g_1) \cap \Box(g_2) \neq \emptyset$. Hence, checking MBBs guarantees that we find all pairs of geometries with $g_1 \cap g_2 \neq \emptyset$. This shows the \emph{completeness} of \radon.
The proof of the \emph{correctness} of \radon is trivial and is a direct result of the use of the call in Line~\ref{lin:apicall}, where \radon checks the pairs $(s, t)$ for whether $r(s,t)$ holds.
\end{proof}

\begin{algorithm}[htb]
\SetKwInOut{Input}{input}\SetKwInOut{Output}{output}
\Input{$S$, set of source resources.
	   $T$, set of target resources. 
	   $r$, topological relation.
      }
\Output{$M$, Mapping from $s \in S$ to $t \in T$ where $r(s,t)$ holds.}
    reversed $\leftarrow false$\;
    \If{$\eh(T) < \eh(S)$}{           \label{lin:swap1}
        \textsc{swap}($S,T$)\;
        r $\leftarrow r'$\;    
        reversed $\leftarrow true$\;    
    }                                                           \label{lin:swap2}
    % \tcc{Build GeoIndex $I(S)$ of all geometries $s \in S$}
    \tcc{Get index $I$ using optimized sparse space tiling}
        $(\Delta_\varphi, \Delta_\lambda) \leftarrow \textsc{FindBestGranularity}(S,T)$\;     \label{lin:best_granularity}
    \ForEach{geometry $s \in S$}{
        $(\varphi_1(s), \lambda_1(s), \varphi_2(s), \lambda_2(s)) \leftarrow \textsc{GetMBBDiagonalCorners}(s)$\; \label{lin:mbbs}
        \For{$i \leftarrow \floor{\varphi_1(s) \cdot \Delta_\varphi} $ \KwTo $\ceil{\varphi_2(s) \cdot \Delta_\varphi}$}{
        \label{lin:hypers1}
            \For{$j \leftarrow \floor{\lambda_1(s) \cdot \Delta_\lambda} $ \KwTo $\ceil{\lambda_2(s) \cdot \Delta_\lambda}$}{
                \textsc{InsertIntoHypercube}($I(S), i, j, s$)\;  \label{lin:hypers2}
                $j \leftarrow j+1$\;
            }
            $i \leftarrow i+1$\;
        }
    }
    % \tcc{Build GeoIndex $I(T)$ of all geometries $t \in T$}
    \ForEach{geometry $t \in T$}{                                   \label{lin:indext1}
        $(\varphi_1(t), \lambda_1(t), \varphi_2(t), \lambda_2(t)) \leftarrow \textsc{GetMBBDiagonalCorners}(t)$\;
        \For{$i \leftarrow \floor{\varphi_1(t) \cdot \Delta_\varphi} $ \KwTo $\ceil{\varphi_2(t) \cdot \Delta_\varphi} $}{
            \For{$j \leftarrow \floor{\lambda_1(t) \cdot \Delta_\lambda} $ \KwTo $\ceil{\lambda_2(t) \cdot \Delta_\lambda} $}{
                \If{$\textsc{GetHypercube}(I(S), i, j)$ is not empty }{
                    \textsc{InsertIntoHypercube}($I(T), i, j, t$)\;
                }
            $j \leftarrow j+1$\;
        }  
            $i \leftarrow i+1$\;
        }
    }                                                           \label{lin:indext2}
    % lat \varphi 
    % long \lambda
    \tcc{Generate Links}
  	\ForEach{hypercube $H_S \in I(S)$}{
    %   	$x \leftarrow \textsc{GetLongIndex}(B_S)$\; 
    %   	$y \leftarrow \textsc{GetLatIndex}(B_S)$\;
    %   	$H_T \leftarrow \textsc{GetHypercube}(I(T), x, y)$\;
    $H_T \leftarrow \textsc{GetHypercube}(I(T),  \varphi(H_S), \lambda(H_S))$\;
  	\If{$H_T$ is not empty}{                            \label{lin:hypercube}
  	    \For{$s \in H_S$}{
  	        \For{$t \in H_T$}{
  	            
                \If{$(s,t)\notin C$}{
                 \label{lin:computedmap}   
                $C\leftarrow C\cup \{(s,t)\}$\; 
                % \textsc{AddToMapping}$(C,s,t)$\;
                 \label{lin:computedmap2} \If{\textsc{TestMBB}$(r,(\varphi_1(s), \lambda_1(s), \varphi_2(s), \lambda_2(s)),(\varphi_1(t), \lambda_1(t), \varphi_2(t), \lambda_2(t)))$}{
      	                    \If{$r(s,t)$ is true}{ \label{lin:apicall}
      	                     %   \textsc{AddToMapping}$(M,s,t)$\; 
      	              $M\leftarrow M\cup \{(s,t)\}$\; 
      	                    }
      	                }
  	                }
  	            }
  	        }
  	    }
    }
\If{reversed}{
    \Return{M'}\;                                   \label{lin:mpar}
}\Else{
    \Return{$M$}\;                                  \label{lin:m}
}
\caption{\radon -- Rapid Discovery of Topological Relations.}
\label{alg:radon}
\end{algorithm}

\afterpage{\clearpage}

\section{Evaluation}
\label{sec:evaluation}

% The aim of our evaluation was to quantify how well \radon performs as well as whether it outperforms state-of-the-art approaches for topological relation discovery.
% To this end, we measured the runtime for each of the aforementioned seven topological relations for our approach as well as the compared systems.
%measures of \emph{precision} (P) \emph{recall} (R) and \emph{F-measure} (F) are by definition 1, where
%\begin{equation}
%    P = \frac{TP}{TP + FP} = \frac{TP}{TP} = 1,\quad
%    R = \frac{TP}{TP + FN} = \frac{TP}{TP} = 1,\quad
%    F = 2\frac{P \cdot R}{P + R} = 1.
%\end{equation}
%With $TP$ being the \emph{true positive} discovered relations, 
%$FP$ are the \emph{false positive} discovered relations, and
%$FN$ are the \emph{false negative} discovered relations.
%Therefore in following experiments, we are not going to evaluate for $P$, $R$ nor $F$.

In the following, we begin by introducing the relations and datasets as well as the hardware setting we used for carrying out our experiments in \autoref{sec:setup}.
In \autoref{sec:eval_granularity}, we evaluate different granularity selection policies for \radon.
Finally, we evaluate \radon vs. the LD framework \silk~\cite{Panayiotisldow2016} and the semantic spatiotemporal RDF store of \strabon~\cite{DBLP:conf/semweb/KyzirakosKK12}.

\subsection{Experimental Setup}
\label{sec:setup}

\subsubsection{Topological relations}
Only a subset of the topological relations obtainable through DE-9IM reflects the semantics of the English language~\cite{Clementini1993,clementini1994modelling} including \texttt{equals}, \texttt{with\-in}, \texttt{contains}, \texttt{disjoint}, \texttt{touches}, \texttt{meets}, \texttt{covers}, \texttt{cov\-ered\-By}, \texttt{in\-ter\-sects}, \texttt{inside}, \texttt{crosses} and \texttt{overlaps}.
Note that some of these relations are \emph{synonyms} (e.g., $\texttt{touches}(x,y) \Leftrightarrow \texttt{meets}(x,y)$) while others are \emph{combinations} of more atomic relations,(e.g., $\texttt{equals}(x,y) \Leftrightarrow  \texttt{within}(x,y) \wedge \texttt{contains}(x,y)$).
Moreover, some relations are the \emph{reverse} of some other relation.
Hence, in this evaluation, we focused on the rapid computation of the $7$ topological relations \texttt{within}, \texttt{touches}, \texttt{overlaps}, \texttt{intersects}, \texttt{equals}, \texttt{crosses} and \texttt{covers} as these are very commonly used~\cite{Panayiotisldow2016,Clementini1993,clementini1994modelling} and implemented in the systems we compare against. 
%The approach presented herein is however capable of improving the discovery of any topological relation by virtue of being able to improve the computation of all DE-9IM matrix entries, through which all topological relations can be constructed. 
%between two geometries party by exploiting inverse. For example, \texttt{disjoint} can be easily obtained by computing the inverse to \texttt{intersects}, which can be computed rapidly using our approach.
These relations are formally defined as follows:
% \todo[inline]{AN: 1) Example. 2) Provide all 16 relations and show that they can be reduced to 7 relations. 3) Check formalization of 7 relations and make sure that you combine booleans and not sets with logical operators.}

%The Matrix~\ref{mat:de9im} provides an approach for classifying geometry relations. 
%Roughly speaking, with a true/false matrix domain, there are $512$ possible 2D topological relations, that can be grouped into binary classification schemes. 
%For English speakers, there are about 16 topological relations that have a name that reflects their semantics.
% In this paper, we present an efficient algorithm for computing the 7 basic topological relation: \texttt{equals, intersects, touches, covers, overlaps, within} and \texttt{crosses}. 
% These relations can be reduced to the set of the 7 basic topological relation: \texttt{equals, intersects, touches, covers, overlaps, within} and \texttt{crosses}, formally defined as:

% % force algorithm float 
%  \FloatBarrier

\begin{definition}
A geometry $g_1$ is \textbf{topologically equal} to a geometry $g_2$ iff their interiors intersect and no parts of the interior or boundary of one geometry intersects the exterior of the other. Formally, 
$(I(g_1) \cap I(g_2)) \wedge
\lnot(I(g_1) \cap E(g_2) \neq \emptyset) \wedge
\lnot(B(g_1) \cap E(g_2)\neq \emptyset) \wedge
\lnot(E(g_1) \cap I(g_2)\neq \emptyset) \wedge
\lnot(E(g_1) \cap B(g_2)\neq \emptyset)$.
\end{definition}

\begin{definition}
Two geometries $g_1$ and $g_2$ are \textbf{topological intersects} iff they have at least one point in common. Formally, 
$(I(g_1) \cap I(g_2)) \wedge
\lnot(I(g_1) \cup I(g_2)\neq \emptyset) \vee
\lnot(I(g_1) \cap B(g_2)\neq \emptyset) \vee
\lnot(B(g_1) \cap I(g_2)\neq \emptyset) \vee 
\lnot(B(g_1) \cap B(g_2)\neq \emptyset)$.
\end{definition}

\begin{definition}
A geometry $g_1$ is \textbf{topologically touched} a geometry $g_2$ iff they have at least one boundary point in common, but no interior points. Formally, 
$(\lnot(I(g_1) \cap I(g_2)\neq \emptyset) \wedge (I(g_1) \cap B(g_2)\neq \emptyset)) \vee (\lnot(I(g_1) \cap I(g_2)\neq \emptyset)  \wedge (B(g_1) \cap I(g_2)\neq \emptyset)) \vee
(\lnot(I(g_1) \cap I(g_2)\neq \emptyset) \wedge (B(g_1) \cap B(g_2)\neq \emptyset))$.
\end{definition}

\begin{definition}
A geometry $g_1$ \textbf{topologically crosses} a geometry $g_2$ iff they have some but not all interior points in common, and the dimension of the intersection is less than the the maximum dimension of the two input geometries. Formally, 
$(dim(I(g_1) \cap I(g_2))) < \max(dim(I(g_1)), dim(I(g_2))) \wedge (g_1 \cap g_2 \neq g_1\neq \emptyset) \wedge (g_1 \cap g_2 \neq g_2\neq \emptyset)$.
\end{definition}

\begin{definition}
A geometry $g_1$ \textbf{topologically overlaps} a geometry $g_2$ iff they have some but not all points in common, they have the same dimension, and the intersection of the interiors of the two geometries has the same dimension as the geometries themselves.
Formally, 
$(I(g_1) \cap I(g_2)\neq \emptyset) \wedge
(I(g_1) \cap E(g_2)\neq \emptyset) \wedge
(E(g_1) \cap I(g_2)\neq \emptyset)$ for surfaces and
$dim(I(g_1) \cap I(g_2)) = 1 \wedge
(I(g_1) \cap E(g_2)\neq \emptyset) \wedge
(E(g_1) \cap I(g_2)\neq \emptyset)$ for lines.
\end{definition}

\begin{definition}
A geometry $g_1$ is \textbf{topologically within} a geometry $g_2$ iff $g_1$ lies in the interior of $g_2$. Formally, $(I(g_1) \cap I(g_2)) \wedge
\lnot(E(g_2) \cup I(g_1)\neq \emptyset) \wedge
\lnot(E(g_2) \cap B(g_1)\neq \emptyset)$.
\end{definition}

\begin{definition}
A geometry $g_1$ \textbf{topologically covers} a geometry $g_2$ iff every point of the interior and boundary of $g_2$ is also a point of either the interior or boundary of $g_1$. Formally, 
$((I(g_1) \cap I(g_2)\neq \emptyset) \wedge \lnot(E(g_1) \cap I(g_2)\neq \emptyset) \wedge \lnot(E(g_1) \cap B(g_2)\neq \emptyset)) \vee
((I(g_1) \cap B(g_2)\neq \emptyset) \wedge \lnot(E(g_1) \cap I(g_2)\neq \emptyset) \wedge \lnot(E(g_1) \cap B(g_2)\neq \emptyset)) \vee
((B(g_1) \cap I(g_2)\neq \emptyset) \wedge \lnot(E(g_1) \cap I(g_2)\neq \emptyset) \wedge \lnot(E(g_1) \cap B(g_2)\neq \emptyset)) \vee
((B(g_1) \cap B(g_2)\neq \emptyset) \wedge \lnot(E(g_1) \cap I(g_2)\neq \emptyset) \wedge \lnot(E(g_1) \cap B(g_2)\neq \emptyset))$.
\end{definition}

We dub the blue, green and gray areas in \autoref{fig:leipzig1} $a_1$, $a_2$ and $a_3$ respectively. Then, \texttt{distinct}$(a_1, a_2)$, \texttt{within}$(a_2, a_3)$ and \texttt{in\-ter\-sects}$(a_2, a_3)$,
hold.

\subsubsection{Datasets}
% \todo{Add overview table of datasets}
% datasets
We evaluated our approach using two real-world datasets. 
The first dataset, the \emph{Nomenclature of Territorial Units for Statistics} or simply \emph{NUTS}\footnote{Version 0.91  (\url{http://nuts.geovocab.org/data/0.91/}) is used in this work.} is manually curated by 
% \url{http://statistics.data.gov.uk}
the \emph{Eurostat} group of the European Commission.
% , where regions are described along with their temporal validity.
NUTS contains a detailed hierarchical description of statistical regions for the whole European regions.
% in 3 levels: NUTS0 contains countries, NUTS1 is made up of \emph{major socio-economic regions} (i.e., federal states/districts where applicable) and NUTS3 has the smallest granularity (i.e., cities and villages).\footnote{For more details about NUTS see \url{http://datahub.io/dataset/linked-nuts}}
% There also exist two smaller levels, LAU (Local Administrative Unit), formerly NUTS4 and NUT5, which are not included in our version of the dataset.
The second dataset, the \emph{CORINE Land Cover} or simply \emph{CLC} is an activity of the \emph{European Environment Agency} that collects data regarding the land cover of European countries. 
CLC contains 44 sub-datasets ranging from major categories of land cover (e.g., agricultural areas) to very specific characterisations (e.g., olive grives). 
Subsets of CLC range in size from $240$ to $248,242$ resources.\footnote{For more details about CLC see \url{https://datahub.io/dataset/corine-land-cover}}
For testing the scalability of \radon, we merged all subsets of CLC into one big dataset of size $2,209,538$ (dubbed $CLC_m$).
% Prepossessing
We preprocessed the datasets in the following fashion: To enable the processing of the NUTS dataset by \radon, \silk and \strabon, the \texttt{ngeo:posList} serialisation was converted into the WKT format prior to experiments. 
Moreover, because of a \silk issue\footnote{\url{https://github.com/silk-framework/silk/issues/57}}, we had to trim lines larger than $64$\,KB from all datasets in order to get a fair comparison. 
All the reported dataset sizes are after preprocessing.

\subsubsection{Hardware and Software}
% HW + SW
All experiments were carried out on a 64-core $2.3$ GHz PC running OpenJDK 64-Bit Server 1.7.0 75 on \emph{Ubuntu} $14.04.2$ LTS. 
Unless stated otherwise, each experiment was assigned 20 GB RAM and a timeout limit of 2 hour. Experiments which ran longer than this upper limit were terminated and the
processed data percentage as well as the estimated time are reported.
For \silk experiments, we  ran  our  experiments  using its latest version (v2.6.1) with a blocking factor of $10$ as in~\cite{Panayiotisldow2016}.
For \strabon, we also used the latest version (v3.2.10)  with the accordingly tuned \emph{PostgreSQL} (v9.1.13) and \emph{PostGIS} (v2.0) as proposed by the developers. 
\radon is implemented as a part of the LD framework \limes.
A more complete list of results can be obtained from the project website\footnote{Link is omitted not to violate the blind review requirements.}.
% \url{http://aksw.org/Projects/LIMES}}.
Note that \radon achieves a precision, a recall and an F-measure of 1 by virtue of its completeness and correctness. \silk and \strabon theoretically achieve the same F-measure (we were not always able to check this value for the two systems as the experiments did not always terminate before the timeout).

\subsection{Experimental Results}
 
%\subsubsection{Precision, Recall and F-measure}

\subsubsection{Granularity Factor Selection Heuristic}
\label{sec:eval_granularity}

The aim of this experiment was to evaluate different heuristics to approximate the optimal \emph{granularity factors} $\Delta_\varphi$ and $\Delta_\lambda$ used for tiling the space and generating the sparse index of hypercubes. 
%In order to get the most out of the simple heuristics, we chose to use independent granularity factors for the two dimensions latitude ($\Delta_\varphi$) and longitude ($\Delta_\lambda$).
% lat \varphi 
% long \lambda
We tried 4 different heuristics corresponding to a statistical measure: \emph{minimum, maximum, median} and \emph{average}.
Each heuristic first computes the respective statistical measure $\eta$ independently for both datasets and both dimensions, resulting in 4 temporary values $h_{\eta,\varphi}(S)$, $h_{\eta,\varphi}(T)$, $h_{\eta,\lambda}(S)$, $h_{\eta,\lambda}(T)$. 
Finally, the granularity factor in \emph{each dimension} is the average of the two datasets.
%For example, the \emph{minimum} heuristic sets the granularity factor in the respective dimension as the minimum value of all polygons' lengths and widths of both the source and target datasets. 
Formally, 
\begin{align}
    &h_{\eta,\varphi}(X) = \heureta\limits_{x \in X} \left\{ \max\limits_{p \in x}\{\varphi(p)\} -\min\limits_{p \in x}\{\varphi(p)\} \right\} &
    &h_{\eta,\lambda}(X) = \heureta\limits_{x \in X} \left\{\max\limits_{p \in x}\{\lambda(p)\} -\min\limits_{p \in x}\{\lambda(p)\}\right\}\\
    &\Delta_{\eta,\varphi}(S,T) = \frac{1}{2}\Big(h_{\eta,\varphi}(S)+h_{\eta,\varphi}(T) \Big)&
    &\Delta_{\eta,\lambda}(S,T) = \frac{1}{2}\Big(h_{\eta,\lambda}(S)+h_{\eta,\lambda}(T) \Big)
\end{align}
% \begin{equation}
%     h_{\eta,\varphi}(S,T) = \frac{1}{2}\Bigg(\heureta\limits_{s \in S} \left\{ \max\limits_{s \in S}{\varphi(s)} - \min\limits_{s \in S}{\varphi(s)} \right\} + \heureta\limits_{t \in T} \left\{\textit{height}(t)\right\}\Bigg) 
% \end{equation}
% \begin{equation}
%     h_{\eta,\lambda}(S,T) = \frac{1}{2}\Bigg(\heureta\limits_{s \in S} \left\{\textit{width}(s)\right\} + \heureta\limits_{t \in T} \left\{\textit{width}(t)\right\}\Bigg)
% \end{equation}
Here $S, T$ are the input source and target datasets, $\varphi(p)$ the latitude of a point $p$, $\lambda(p)$ the longitude of $p$ and $\eta \in \left\{ \textit{min, max, avg, median} \right\}$.
%All other heuristics apply its own policy in the same way.
We used all the $44$ subsets of the CLC dataset as input for this experiment and recorded how many times each heuristic achieved the best runtime for the \texttt{intersects} relation.
Additionally, when a heuristic was not the best in a run, we computed the percentage it was worse than the best one. 
The \emph{average} heuristic achieves the best result $24$ times out of $44$ experiments. Runner-up is \emph{median}, achieving the best runtime $17$ times.
Finally, the \emph{min} and \emph{max} heuristics achieved only $2$ and $1$ time(s) respectively.
Interestingly, \emph{average} and \emph{median} were only $4\%$ slower than the best measure on average when not being the best, while \emph{min} and \emph{max} where  $34\%$ and $61\%$ worse on average respectively.
% Therefore, the average heuristic will be the default in our final \radon implementation.
Based on these results, we used the \emph{average} heuristic as the granularity selection policy in the rest of the experiments.

% Additionally, dataset swapping to minimise hypercube generation is based on an estimated area coverage $Ac$, which is the average polygon length times the average polygon height times the size of the dataset. Iff ${Ac_T}$ < $Ac_S$, swap $S$ and $T$.

%\subsubsection{Evaluation against \silk}
%\label{sec:radon_silk}

The basic idea behind the first three sets of of experiments is to quantify the \emph{speedup} gained by \radon over other LD frameworks. 
To the best of our knowledge, only the \silk LD framework recently~\cite{Panayiotisldow2016} implemented a multi-dimensional blocking approach to compute the topological relations.  Therefore, we compare \radon's and \silk's runtimes in the subsequent experiments  
%As stated before, both \radon and \silk implementation achieve $100\%$ F-measure of the returned mappings, therefore we will not going to perform any evaluation related to precision, recall nor F-measure.

% Small datasets
In the \emph{first set of experiments}, we aimed of quantify the speedup of \radon over the other state-of-the-art approaches when applied to small datasets.
To this end, we ran $44$ experiments for each of the $7$ basic topological relations identified in the previous section. In each experiment, we compared one of the $44$ subsets of the CLC with the full NUTS. Altogether, we carried out $308$ experiments. Note that both \radon and \silk were ran on 1 core.
\radon achieves an average speedup of 
$221.52$, $213.76$, $4.94$, $4.82$, $4.77$, $4.76$ and $4.75$ for the relations 
\texttt{within}, \texttt{equals}, \texttt{covers}, \texttt{overlaps}, \texttt{intersects}, \texttt{crosses} and \texttt{touches} respectively.
Overall, \radon was able to outperform \silk by being $65.62$ times faster on average over all topological relations. 
Moreover, \radon was able to achieve a linear speedup relative to the dataset sizes. 
In \autoref{fig:speedup_silk}, we show an overview of a subset of the experimental results (including a linear fit) achieved on the relations on which \radon achieved the best (up to two orders of 450 times faster) and the poorest (up to 6.5 times faster) relative performance w.r.t. \silk.
Moreover, \radon ran significantly less complete computations of the relations at hand.
On average, $449$ times less computations per relation  (\autoref{fig:api}). 
% \todo{Check these values. MS: it is true}
% for the \texttt{within} and \texttt{touches} relations.

% \todo[inline]{images are way too small. Impossible to read}

\begin{figure}[tb]
    \centering
    % \subfigure[covers]{\label{fig:covers}\includegraphics[trim={0 0 0 20},clip,width=0.33\textwidth]{images/rl/covers_rl.pdf}}~
    % \subfigure[crosses]{\label{fig:crosses}\includegraphics[trim={0 0 0 20},clip,width=0.33\textwidth]{images/rl/crosses_rl.pdf}}~
    % \subfigure[equals]{\label{fig:equals}\includegraphics[trim={0 0 0 20},clip,width=0.33\textwidth]{images/rl/equals_rl.pdf}}
    % \subfigure[intersects]{\label{fig:intersects}\includegraphics[trim={0 0 0 20},clip,width=0.33\textwidth]{images/rl/intersects_rl.pdf}}
    % \subfigure[overlaps]{\label{fig:overlaps}\includegraphics[trim={0 0 0 20},clip,width=0.33\textwidth]{images/rl/overlaps_rl.pdf}}~
    \subfigure[\texttt{within}]{\label{fig:within}\includegraphics[trim={0 0 0 20},clip,width=0.5\textwidth]{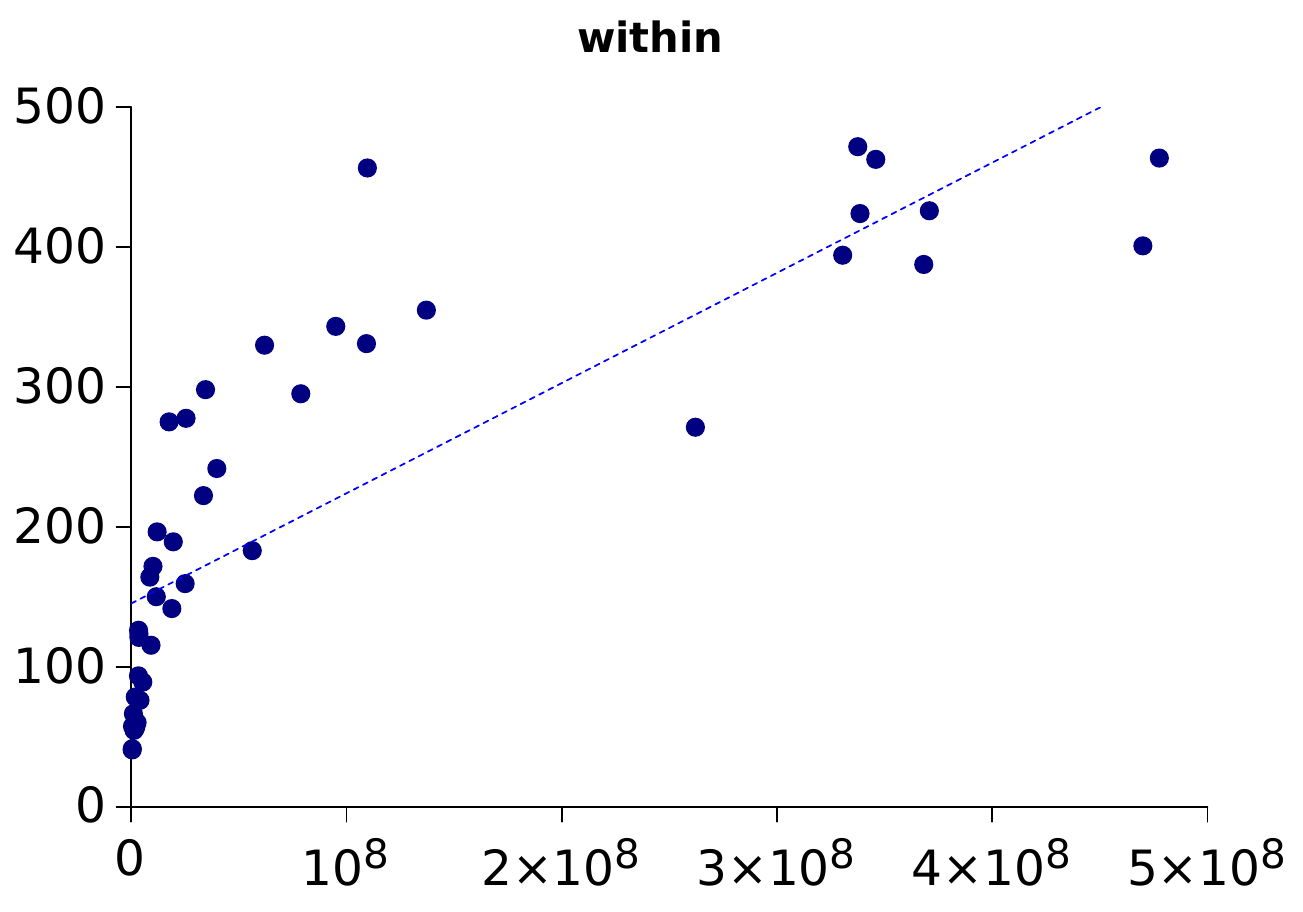}}~
    \subfigure[\texttt{touches}]{\label{fig:touches}\includegraphics[trim={0 0 0 20},clip,width=0.5\textwidth]{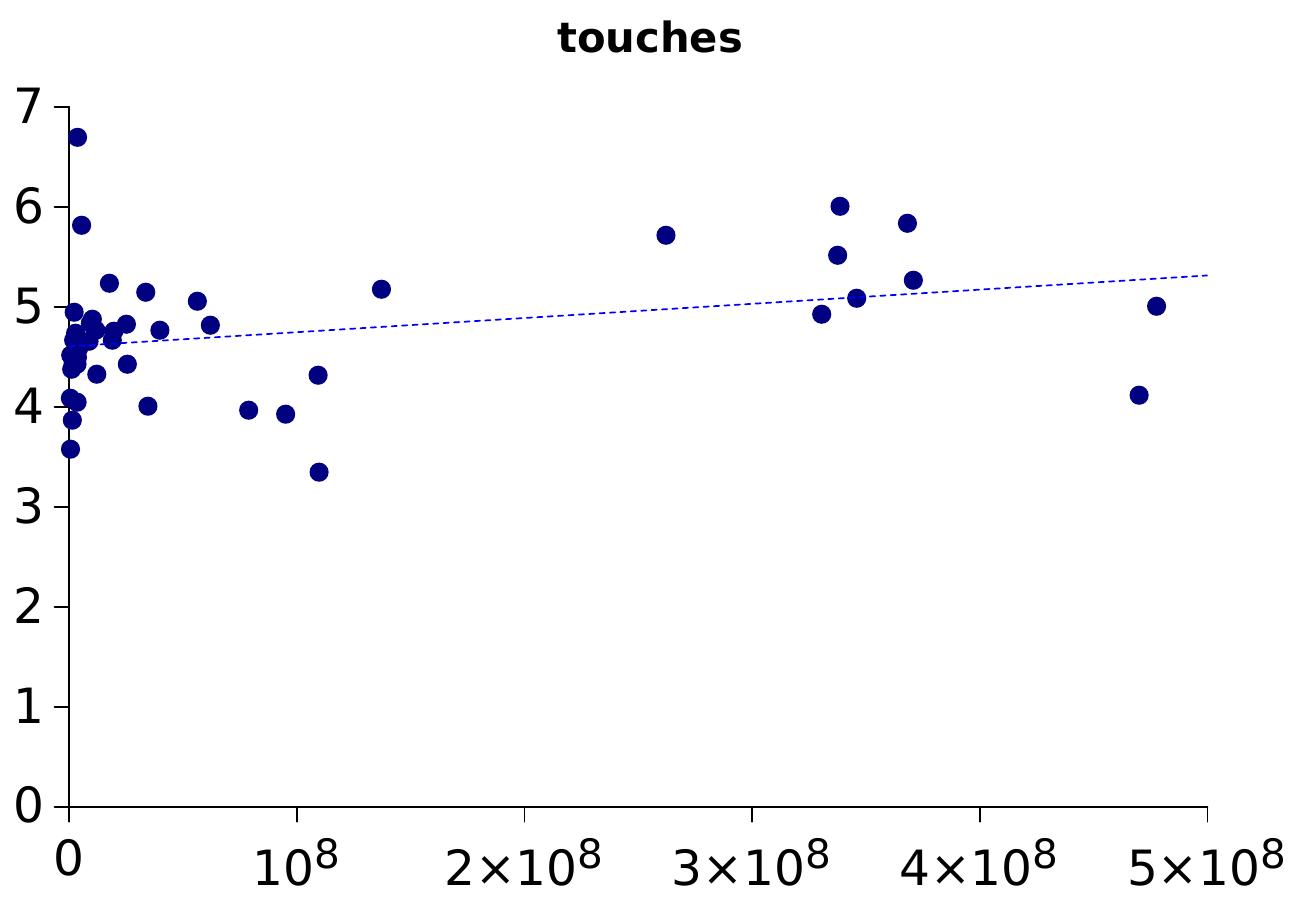}}
    \caption{Speedup of \radon over \silk. The x-axis represents the dataset sizes, y-axis represents the speedup. The blue dashed line is the linear regression line.}\label{fig:speedup_silk}
\end{figure}

% Big Dataset
In the \emph{second set of experiments} we aimed to evaluate the scalability of \radon when applied to big datasets.
Thus, we used the merged dataset $CLC_m$ as both source and target dataset and ran \radon and \silk on 1 core.
The results are shown in \autoref{tab:complete_clc}. \radon is able to finish all the tasks within $67.44$ minutes on average (maximum = $95.10$ minutes for the  \texttt{crosses} relation). 
On the other side, \silk was only able to (in average) finalize $0.34\%$ of each task within the 2 hour timeout limit.
%Assuming \silk will achieve a linear scale over the whole dataset we reported the approximated run time in addition to the percentage of the completed portion of the task in the result table.
We extrapolated the runtime of \silk linearly to get an approximation of how long it would need to carry out the tasks at hand. 
On average, \silk would need $24.85$ days to complete each task (linear extrapolation).
Consequently, \radon is at least $715.16$ times faster than \silk on average.
These results emphasize the ability of our algorithm to deal with large datasets even  when ran on 1 core.

\begin{table}[!h]
 \caption{Parallel implementation of \radon vs. \silk single machine for $CLC_m$ deduplication. 
      Runtimes are in minutes with timeout limit of 2 hour. Processes run above this upper limit were terminated and the processed data percentage as well as the estimated time are reported.}
   \centering
\begin{tabularx}{0.61\linewidth}{@{} lcrrr @{}}
\toprule
    Relation & \#Thr. & \radon & \multicolumn{1}{c}{\silk}  & Speedup\\ \midrule
    \multirow{4}{*}{equals} 
        & 1 & 24.11 & 36500 (0.33\%) & 1,513.58 \\
        & 2 & 13.15 & 21667 (0.55\%) & 1,647.58\\
        & 4 & 6.81 & 11750 (1.02\%) & 1,725.77\\
        & 8 & 3.79 & 6286 (1.91\%) & 1,658.78\\ \midrule
    \multirow{4}{*}{intersects}
        & 1 & 93.17 & 37500 (0.32\%) & 402.50 \\
        & 2 & 49.03 & 20667 (0.58\%) & 421.53\\
        & 4 & 25.11 & 12000 (1.00\%) & 477.81\\
        & 8 & 13.04 & 6300 (1.90\%) & 483.24\\ \midrule
    \multirow{4}{*}{crosses}
        & 1 & 95.10 & 35000 (0.34\%) & 368.05 \\
        & 2 & 48.02 & 21029 (0.57\%) & 437.96\\
        & 4 & 25.06 & 11881 (1.01\%) & 474.03\\
        & 8 & 13.08 & 6267 (1.91\%) & 479.21\\ \midrule
    \multirow{4}{*}{overlaps}
        & 1& 93.13 & 35000 (0.34\%) & 375.81 \\
        & 2 & 48.17 & 21404 (0.56\%) & 444.34\\
        & 4 & 25.09 & 11650 (1.03\%) & 464.32\\
        & 8 & 13.30 & 6235 (1.92\%) & 468.71\\ \midrule
    \multirow{4}{*}{within}
        & 1 & 36.47 & 35000 (0.34\%) & 959.74 \\
        & 2 & 18.26 & 20667 (0.58\%) & 1,131.86\\
        & 4 & 9.44 & 11765 (1.02\%) & 1,246.34\\
        & 8 & 5.92 & 6202 (1.93\%) & 1,048.34\\ \midrule
    \multirow{4}{*}{covers}
        & 1 & 35.62 & 36000 (0.33\%) & 1,010.75 \\
        & 2 & 18.51 & 21029 (0.57\%) & 1,136.10\\
        & 4 & 10.23 & 12000 (1.00\%) & 1,172.50\\
        & 8 & 5.33 & 6300 (1.90\%) & 1,182.13\\ \midrule
    \multirow{4}{*}{touches}
        & 1 & 94.50 & 35500 (0.34\%) & 375.68 \\
        & 2 & 47.71 & 22196 (0.54\%) & 465.18\\
        & 4 & 25.09 & 12121 (0.99\%) & 483.08 \\ 
        & 8	& 13.30	& 6381 (1.88\%) & 479.75 \\ \bottomrule
\end{tabularx}
\label{tab:complete_clc}
\end{table}

In the \emph{third set of experiments}, we wanted to quantify the \emph{speedup} gained by using a parallel implementation of \autoref{alg:radon} over the parallel implementation of \silk. % single machine algorithm. 
For load balancing in \radon, we used the simple round robin load balancing policy~\cite{shreedhar1996efficient} with chunks size of $1000$.
As data, we used $CLC_m$ as both source and target.
The parallel implementations were configured to run using $2$, $4$ and $8$ threads.
The results (\autoref{tab:complete_clc}) show that our parallel implementation for \radon was able to discover all the topological relations in $20.83$ minutes in average (maximum of $49.03$ minutes in the case of the \texttt{intersect} relation).
On the other side, \silk implementation was only able to  (in average) finalize $1.16\%$ of each task within the 2 hours timeout limit.
We extrapolated the performance of \silk's parallel implementation and computed that it will need an average of $4.36$ days to finalize each task with $8$ threads.
Overall, our parallel implementation of \radon was up to $1725.77$ times  ($834.69$ times on average) faster than \silk.% single machine parallel implementation.
Those results clearly show the scalability of \radon's parallel implementation.
%when implemented in a parallel machines.

%\subsubsection{Evaluation against \strabon}
%\label{sec:radon_strabon}

%\strabon has the restriction that both the source and the target datasets must be stored locally in different named graphs.

\begin{lstlisting}[label={lst:intersects}, style=sparql, numbers=left, float=tb, numberstyle=\tiny, caption={SPARQL query for retrieving the \texttt{intersects} topological relation between resources from NUTS and CLC from \strabon.}]
SELECT ?s ?t WHERE {
    GRAPH <http://nuts.eu/>    { ?s geo:asWKT ?s_geometry. }
    GRAPH <http://clc.eu/#243> { ?t geo:asWKT ?t_geometry. }
    FILTER( strdf:intersects(?s_geometry, ?t_geometry) )
}
\end{lstlisting}

% PREFIX strdf: <http://strdf.di.uoa.gr/ontology#>
% PREFIX xsd: <http://www.w3.org/2001/XMLSchema#>
% PREFIX geo: <http://www.opengis.net/ont/geosparql#>
% PREFIX geof: <http://www.opengis.net/def/geosparql/function/>

\begin{figure}[htb]
    \centering
    \subfigure[Average number of computations of topological relations]{\label{fig:api}\includegraphics[width=0.5\textwidth]{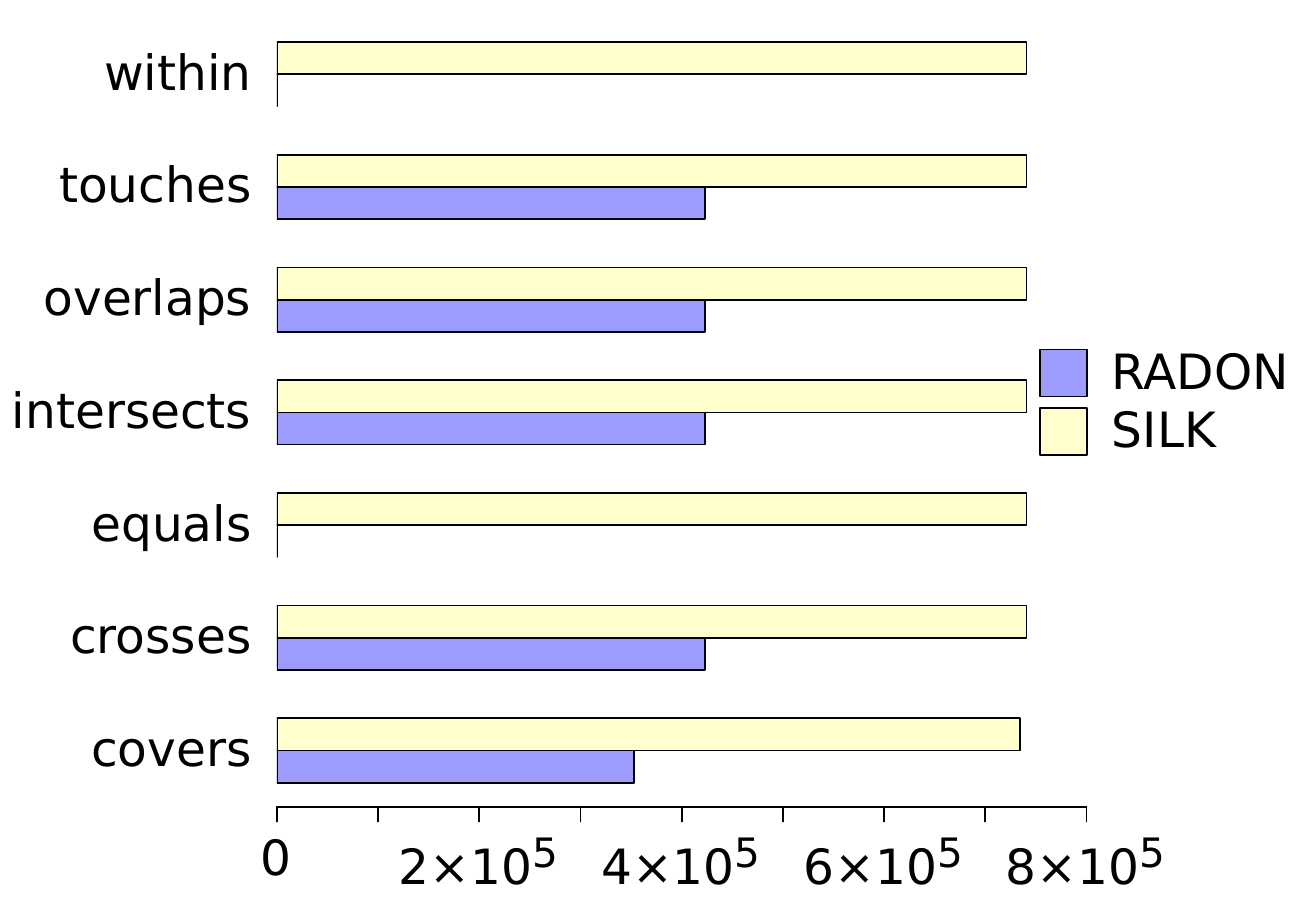}}~
    \subfigure[Average runtime]{\label{fig:runtime}\includegraphics[width=0.5\textwidth]{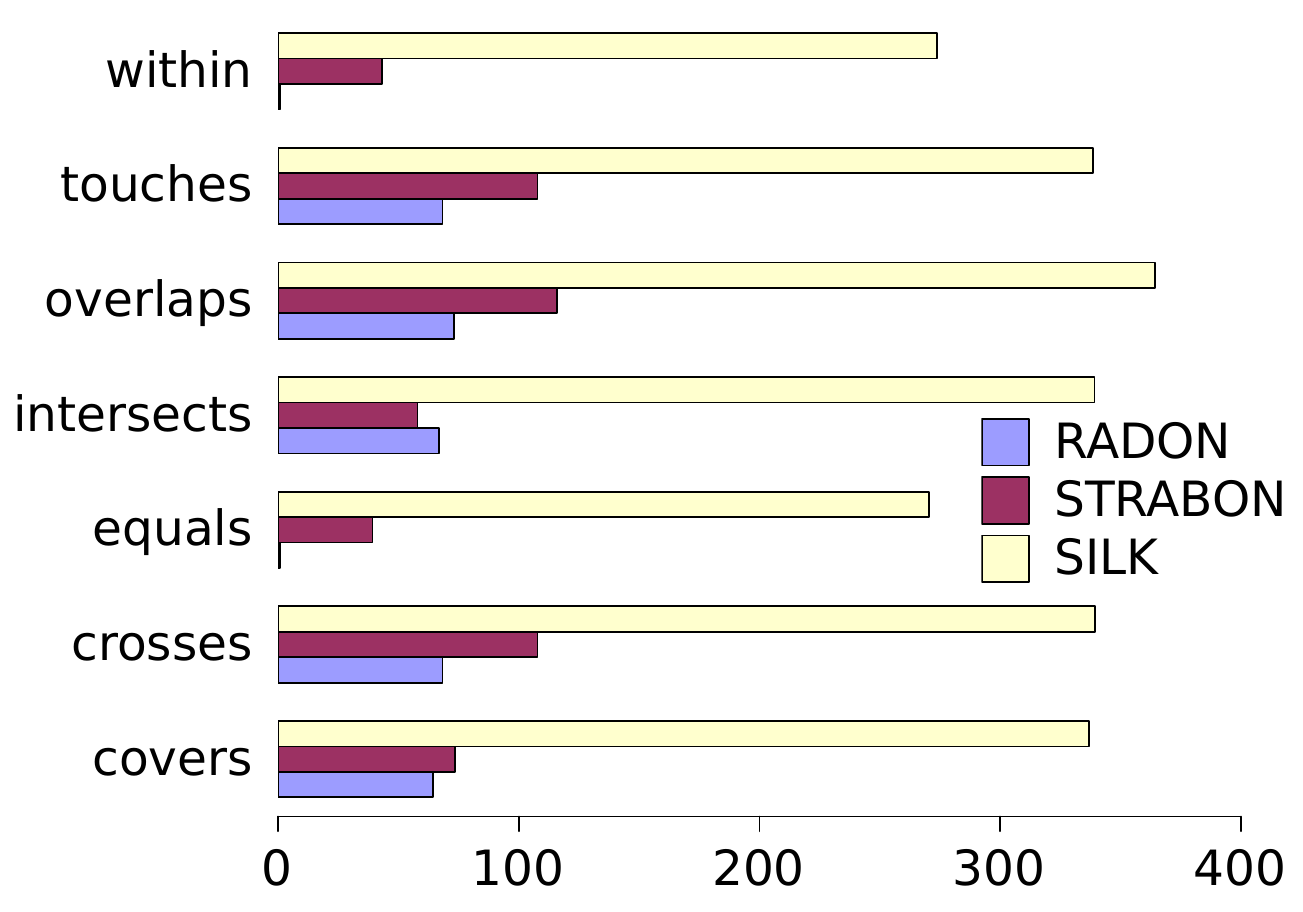}}
    \caption{Average number of complete computations of topological relations and average runtime for the datasets experiments. All runtimes are in seconds.}\label{fig:runtime_api}
\end{figure}

In our \emph{fourth set of experiments}, we aimed to compare \radon against \strabon on small datasets. 
The semantic spatio-temporal RDF store \strabon is not a LD framework but since it supports the \emph{GeoSPARQL} and \emph{stSPARQL} query languages. 
Therefore, \strabon can be employed for discovering topological relations via corresponding queries.
%For instance,  the \texttt{intersects} relation can be discovered via the SPARQL query in \autoref{lst:intersects}. All the other 7 topological relations considered herein are discovered in the same manner.
To compare with \strabon, we used the same setting we used in the first set of experiments.
%, where performed $308$ experiments with the whole NUTS vs. the $44$ sub-datasets of CLC as source and targets datasets respectively for each of the $7$ topological relation of \radon.
\autoref{fig:runtime} shows the average runtimes result of both \radon and \strabon in seconds.
In average, \radon was $11.99$ times faster than \strabon. 
% It might seem strange to the reader that \strabon performed better than \radon in the \texttt{intersects} relation while being slower on every other relation.
Interestingly, \strabon performed better than \radon on the \texttt{intersects} relation. 
The reason behind this behaviour is that \strabon uses an \emph{R-tree-over-GiST} spatial index over the stored geometries in the underlying \emph{PostGIS} database~\cite{DBLP:conf/semweb/KyzirakosKK12}.
This data structure is highly optimized for the retrieval of spatially connected objects.
Hence, \strabon requires solely a data retrieval to compute the \texttt{intersects} relation. However, this index is clearly outperformed by our sparse index in all the other relations as well as overall.

In our \emph{fifth and last set of experiments}, we evaluated the scalability of \radon vs. \strabon when tackling large datasets.
To this end, we applied the experimental setting we used in the second set of experiments ($S = T = CLC_m$).
% where we merged all the subsets of CLC to one big dataset of $2,209,538$ resource and use it as both the source and target dataset.
% We also employed the same 2 hour timeout limit.
\strabon was not able to finish any of the experiments within the 2-hour time limit while \radon required approx. 95.10 minutes in the worst case.
Given that \strabon provides no feedback pertaining to the progress of its tasks, we could not extrapolate its runtime.
%To this end, 
% As we were curious to figure out where the break-even point for this fast out-scaling of \radon vs. \strabon lies. \todo[inline]{sentence unclear}
Thus, we attempted a smaller deduplication experiment with only one subset of CLC, CLC-243, which is about $10$ times smaller than the merged  $CLC_m$ dataset.
Even these experiments did not finish within the $2$-hour limit.
Therefore, we approximated \strabon's runtime conservatively as follows:
%to be able to measure at least the lower bound of \radon's speedup over \strabon:
Assume that the CLC-243 deduplication experiments would have finished just one minute after the $2$-hour timeout.
Assuming that \strabon's runtime scales linear with the input dataset size, the merged $CLC_m$ experiments would take roughly $20.17$ hours.
% $121\cdot10=1210$ minutes. 
Having this overly optimistic estimate of \strabon's runtime, \radon achieves an average speedup of $24$. 
When we move from the assumption that \strabon scales linearly to the more realistic assessment that it scales in $O(n^2)$, then we get an average speedup of $241$.
Overall, our results show clearly that \radon outperforms the state of the art by up to 3 orders of magnitude in our experiments.

% DBLP:conf/semweb/KyzirakosKK12

% \begin{figure}[tb]
%     \centering
%     \caption{Average runtimes of \radon against \strabon. All runtimes are in seconds.}
%     \includegraphics[width=0.7\textwidth]{images/limes_strabon.pdf}
%     \label{fig:strabon}
% \end{figure}

% \vspace{-0.5cm}
\section{Related Work}
\label{sec:relatedwork}
% \vspace{-0.5cm}

% math domain papers
Based on the original works of Egenhofer et al.~\cite{egenhofer1991point}, Clementini et al.~\cite{clementini1994modelling} propose the The DE-9IM model to capture the topological relations in the $\mathbb{R}^2$.
In addition, the \emph{Simple Features Model} proposed by OGC\footnote{\url{http://www.opengeospatial.org/standards/sfs}} contain different subsets of the  topological relations that derive from the DE-9IM.
GeoSPARQL~\cite{geosparql} is a recent OGC standard that proposes a query language that enable the discovery of topological relations.
GeoSPARQL is implemented in the spatiotemporal RDF store \strabon~\cite{DBLP:conf/semweb/KyzirakosKK12}. Other frameworks such as Virtuoso\footnote{\url{http://virtuoso.openlinksw.com/}} and newly BlazeGraph\footnote{\url{https://www.blazegraph.com/}} support geo-spatial extensions of SPARQL. 
The discovery of topological relations has been paid little attention to in previous research related to Link Discovery~\cite{rw2013:introlinkeddata}.
Up to now, the state-of-the-art LD frameworks were able to discover only spatial similarities~\cite{geoLD2011,geoLD2006,geoLD2012}.
For example, \cite{orchid} uses the \emph{Hausdorff} distance to compute the point-set distance between geo-spatial entities.
In recent work, \cite{AGDISTIS_ECAI} implements an efficient approach for \emph{Allen} Relations extraction.%\todo{@Axel: do you think we should reference this?}
To the best of our knowledge, the only LD framework that support discovery of topological relaions is \silk \cite{Panayiotisldow2016}.
Based on \emph{MultiBlocking} technique, \cite{Panayiotisldow2016} computes the topological relations according to the DE-9IM standard between geo-spatial resources.
A detailed review of the current state of LD frameworks is recently published in~\cite{nentwig2015survey}.

\section{Conclusions and Future Work}
\label{sec:conclusion}
% \vspace{-0.5cm}
We presented \radon, an approach for rapid discovery of topological relations among geo-spatial resources.
\radon combines space tiling, minimum bounding box approximation and a sparse index to achieve a high scalability.
We evaluated \radon with real datasets of various sizes and showed that in addition to being complete and correct, it also outperforms the state of the art by up to three orders of magnitude (e.g., \texttt{equals} relation against \silk).
The parallel implementation of \radon currently employs a simple round robin load balancing policy. 
In future work, we aim to apply more sophisticated load balancing approaches, such as the particle-swarm-optimization based approaches~\cite{Sherif-dpso}.
%for better scalability on smaller datasets. 
In addition, we will consider the usage of other topology approximation methods, such as minimum bounding circles.
Finally, we will extend \radon to discover topological relations in higher dimensions, e.g., in 5D datasets.%~\cite{vanOosterom2010}. 
% \todo[inline]{Add reference for 5D datasets}

\bibliographystyle{abbrv}
% \bibliography{literature,aksw,limes}
\bibliography{annex}
\end{document}